\newcommand{\longv}[1]{#1}
\newcommand{\shortv}[1]{}
\newcommand{\hide}[1]{}
\newcommand{\Odd}[1]{\mathsf{Odd}(#1)}
\newcommand{\Even}[1]{\mathsf{Even}(#1)}
\newcommand{\bnf}{::=}
\newcommand{\dom}[1]{\mathit{Dom}(#1)}
\newcommand{\distrs}[1]{\Delta({#1})}
\newcommand{\distrone}{\mathcal{D}}
\newcommand{\tensor}{\otimes}
\newcommand{\arrow}{\multimap}
\newcommand{\midd}{\; \; \mbox{\Large{$\mid$}}\;\;}
\newcommand{\gameone}{G}
\newcommand{\gametwo}{H}
\newcommand{\gamethree}{K}
\newcommand{\BB}{\mathbf{B}}
\newcommand{\Str}{\mathbf{S}}
\newcommand{\TStr}{\mathbf{T}}
\newcommand{\LTStr}{\mathbf{L}}
\newcommand{\opponent}[1]{O_{#1}}
\newcommand{\player}[1]{P_{#1}}
\newcommand{\legals}[1]{L_{#1}}
\newcommand{\legalspar}[2]{L_{#1}^{#2}}
\newcommand{\stratone}{\mathsf{f}}
\newcommand{\strattwo}{\mathsf{g}}
\newcommand{\stracopy}{\mathsf{once}}
\newcommand{\stracat}{\mathsf{mult}}
\newcommand{\strinfvar}{\mathsf{infvar}}
\newcommand{\strcoll}{\mathsf{coll}}
\newcommand{\strarandsof}{\mathsf{randsof}}
\newcommand{\movone}{x}
\newcommand{\movtwo}{y}
\newcommand{\playone}{s}
\newcommand{\alts}[2]{\text{Alt}(#1,#2)}
\newcommand{\gameparone}{G^{\circ}}
\newcommand{\Strpar}[1]{\Str[#1]}
\newcommand{\TStrpar}[1]{\TStr[#1]}
\newcommand{\LTStrpar}[1]{\LTStr[#1]}
\newcommand{\expar}[2]{!_{#1}#2}
\newcommand{\sof}[3]{\mathit{SOF}_{#1,#2,#3}}
\newcommand{\linsof}[2]{\mathit{LINSOF}_{#1,#2}}
\newcommand{\logsof}[1]{\mathit{LOGSOF}_{#1}}
\newcommand{\conv}[2]{#1\Downarrow^{#2} 1}
\newcommand{\pconv}[3]{#1\Downarrow^{#2} #3}
\newcommand{\probobs}[2]{\text{Prob}_{#1}(#2)}
\newcommand{\obpgames}{\mathbf{PolyPG^{\oplus}}}
\newcommand{\PG}{\mathscr{PG}}
\newcommand{\PolyPG}{\mathscr{PPG}}
\newcommand{\pols}{\mathbf{Pol}}
\newcommand{\idpol}{\iota}
\newcommand{\ulg}{\lceil\lg\rceil}
\newcommand{\moves}[1]{M_{#1}}
\newcommand{\restr}[3]{#1 \upharpoonright_{#2,#3}}
\newcommand{\caracplay}[1]{\overline{#1}}
\newcommand{\funone}{g}
\newcommand{\unif}[1]{U_{#1}}
\newcommand{\bfunone}{F}
\newcommand{\parone}{N} 
\newcommand{\partwo}{M}
\newcommand{\parthree}{L}
\newcommand{\nsetone}{S}
\newcommand{\Var}[2]{\mathbf{Var}_{#1}(#2)} 
\newcommand{\ev}[1]{\mathbb{E}(#1)} 
\newcommand{\Infl}[3]{\mathbf{Inf}_{#1}^{#2}(#3)}
\newcommand{\dtone}{T} 
\newcommand{\enq}[3]{\Delta_{#1,#2}(#3)}
\newcommand{\subst}[3]{#1[#2\leftarrow #3]}
\newcommand{\BinStr}{\mathbb{S}}
\newcommand{\PBinStr}[1]{\mathbb{S}[#1]}
\newcommand{\Bool}{\mathbb{B}}
\newcommand{\NN}{\mathbb{N}}
\newcommand{\RR}{\mathbb{R}}
\newenvironment{proof}{\begin{trivlist}
       \item[\hskip \labelsep {\bfseries Proof.}]}{\hfill $\Box$ \end{trivlist}}
  \newtheorem{theorem}{Theorem}
  \newtheorem{definition}[theorem]{Definition}
  \newtheorem{lemma}[theorem]{Lemma}
  \newtheorem{proposition}[theorem]{Proposition}
  \newtheorem{example}[theorem]{Example}
\definecolor[named]{ACMBlue}{cmyk}{1,0.1,0,0.1}
\definecolor[named]{ACMYellow}{cmyk}{0,0.16,1,0}
\definecolor[named]{ACMOrange}{cmyk}{0,0.42,1,0.01}
\definecolor[named]{ACMRed}{cmyk}{0,0.90,0.86,0}
\definecolor[named]{ACMLightBlue}{cmyk}{0.49,0.01,0,0}
\definecolor[named]{ACMGreen}{cmyk}{0.20,0,1,0.19}
\definecolor[named]{ACMPurple}{cmyk}{0.55,1,0,0.15}
\definecolor[named]{ACMDarkBlue}{cmyk}{1,0.58,0,0.21}
\title{On Higher-Order Cryptography\\ (Long Version)}
\date{}
\titlerunning{On Higher-Order Cryptography}
\author{Boaz Barak}{Harvard University}{}{}{}
\author{Rapha\"elle Crubill\'e}{IMDEA Software Institute}{}{}{}
\author{Ugo Dal Lago}{University of Bologna \& INRIA}{}{}{}
\authorrunning{B. Barak et al.}
\authorrunning{Barak, Crubill\'e, and Dal Lago}
\keywords{Higher-order computation, probabilistic computation, game semantics, cryptography}
  \author{Boaz Barak \and Rapha\"elle Crubill\'e \and Ugo Dal Lago}
\begin{document}
\maketitle         
\begin{abstract}
  Type-two constructions abound in cryptography: adversaries for
  encryption and authentication schemes, if active, are modelled as
  algorithms having access to oracles, i.e. as second-order
  algorithms. But how about making cryptographic schemes
  \emph{themselves} higher-order? This paper gives an answer to this
  question, by first describing why higher-order cryptography is
  interesting as an object of study, then showing how the concept of
  probabilistic polynomial time algorithm can be generalised so as to
  encompass algorithms of order strictly higher than two, and finally
  proving some positive and negative results about the existence of
  higher-order cryptographic primitives, namely authentication schemes
  and pseudorandom functions.
\end{abstract}

\newcommand{\UGO}[1]{\textcolor{red}{#1}}
\newcommand{\BOAZ}[1]{\textcolor{blue}{#1}}
\newcommand{\RAPHA}[1]{\textcolor{violet}{#1}}

\section{Introduction}
Higher-order computation generalizes classic first-order one by
allowing algorithms and functions to not only take \emph{strings} but
also \emph{functions} in input.  It is well-known that this way of
computing gives rise to an interesting computability and complexity
theory~\cite{LongleyNormann,KawamuraCook12,Weihrauch13}, and
that it also constitutes a conceptual basis for the functional programming paradigm,
in which higher-order subroutines allow for a greater degree of modularity and
conciseness in programs.

In cryptography~\cite{GoldreichI06,GoldreichII06,KatzLindell},
computation is necessarily randomized, and being able to restrict the
time complexity of adversaries is itself crucial: most modern
cryptographic schemes are unsecure against computationally unbounded
adversaries. Noticeably, higher-order constructions are often
considered in cryptography, in particular when modeling active
adversaries, which have access to oracles for the underlying
encryption, decription, or authentication functions, and can thus
naturally be seen as second-order algorithms.

Another example of useful cryptographic constructions which can be
spelled out at different type orders, are \emph{pseudorandom}
primitives. Indeed, pseudorandomness can be formulated on (families
of) strings, giving rise to so-called pseudorandom
\emph{generators}~\cite{BlumMicali84}, but also on (families of)
first-order functions, giving rise to the so-called pseudorandom
\emph{functions}~\cite{GGM86}. In the former case, adversaries
(i.e., distinguishers) are ordinary polynomial time algorithms, while
in the latter case, they are polytime oracle machines.

Given the above, it is natural to wonder whether standard primitives
like encryption, authentication, hash functions, or pseudorandom
functions, could be made higher-order. As discussed in Section
\ref{sect:why} below, that would represent a way of dealing with 
code-manipulating programs and their security in a novel, fundamentally
interactive, way. Before even looking at the feasibility of this goal, there is one
obstacle we are facing, which is genuinely definitional: how could we
even \emph{give} notions of security (e.g. pseudorandomness, unforgeability,
and the like) for \emph{second-order} functions, given that those definitions
would rely on a notion of \emph{third-order} probabilistic polynomial time
adversary, itself absent from the literature? Indeed, although
different proposals exist for classes of feasible \emph{deterministic}
functionals~\cite{CookKapron89,KawamuraCook12}, not much is known if the underlying algorithm
has access to a source of randomness.
Moreover, the notion of feasibility used in cryptography is based
on the so-called \emph{security parameter}, a global numerical parameter which
controls the complexity of all the involved parties. In Section
\ref{sect:hocomplexity}, we give a definition of higher-order
probabilistic polynomial time by way of concepts borrowed from game
semantics~\cite{HO00,AJM00}, and being inspired by recent work by Fer\'ee~\cite{Feree17}. We give
evidence to the fact that the provided definition is general enough
to capture a broad class of adversaries of order strictly higher
than two.

After having introduced the model, we take a look at whether any
concrete instance of a secure higher-order cryptographic primitive can
be given. The results we provide are about pseudorandomness functions and
their application to authentication. We prove on the one hand that
those constructions are \emph{not} possible, at least if one insists
on them to have the expected type (see
Section~\ref{sect:compression}). On the other hand, we prove (in
Section~\ref{sect:positive} below) that second-order pseduorandomness \emph{is}
possible if the argument function takes as input a string of
\emph{logarithmic} length.

\section{The Why and How of Encrypting or Authenticating Functions}\label{sect:why}
Encryption and authentication, arguably the two simplest cryptographic
primitives, are often applied to \emph{programs} rather than
mere data.  But when this is done, programs are treated as ordinary data,
i.e., as \emph{strings of symbols}.  In particular, two different but equivalent
programs are seen as different strings, and their encryptions or
authentication tags can be completely different objects.
It is natural to ask the following: would it be possible to deal with
programs as \emph{functions} and not as \emph{strings}, in a
cryptographic scenario? Could we, e.g., encrypt or authenticate programs
seeing them as \emph{black boxes}, thus without any access to their
code?

For the sake of simplicity, suppose that the program $\texttt{P}$ we
deal with has a very simple IO behaviour, i.e. it takes an input a
binary string of length $n$ and returns a boolean. Authenticating it
could in principle be done by querying $\texttt{P}$ on some of its
inputs and, based on the outputs to the queries, compute a tag for
$\texttt{P}$. As usual, such an authenticating scheme would be secure
if no efficient adversary $\mathcal{A}$ could produce a tag
for $\texttt{P}$ without knowing the underlying secret key $k$ (such
that $|k|=n$) with non-negligible probability. Please notice that the adversary, contrarily to the
scheme itself, will have access to the code of
$\texttt{P}$, even if that code has not been used during the
authenticating process. 

But how could security can be \emph{defined} in a setting like the
above? The three entities at hand have the
following types, where $\mathtt{MAC}$ is the authentication
algorithm being defined, $\BinStr=\{0,1\}^*$ is the set of binary strings,
and $\Bool=\{0,1\}$ is the set of boolean values:
$$
\texttt{P}:\BinStr\rightarrow\Bool
\qquad
\mathtt{MAC}:\BinStr\rightarrow(\BinStr\rightarrow\Bool)\rightarrow\BinStr
\qquad
\mathcal{A}:((\BinStr\rightarrow\Bool)\rightarrow\BinStr)\rightarrow(\BinStr \rightarrow \Bool) \times\BinStr
$$
The first argument of $\mathtt{MAC}$ is the key $k$, which is of course
not passed to the adversary $\mathcal{A}$. The latter can query
$\mathtt{MAC}_k$ and produce a tagged message, and whose type, as expected, is
third-order. The above is not an accurate description of the
input-output behaviour of the involved algorithm, and in particular of
the fact that length of the input string to $\texttt{P}$ might be in a
certain relation to the length of $k$, i.e., the underlying
security parametery. Reflecting all this in the types is however possible by replacing occurrences
of the type $\BinStr$ with \emph{refinements} of it, as follows:
\begin{align*}
\texttt{P}&:\PBinStr{n}\rightarrow\Bool\\
\mathtt{MAC}&:\PBinStr{n}\rightarrow(\PBinStr{r(n)}\rightarrow\Bool)\rightarrow\PBinStr{p(n)} \\
\mathcal{A}&:((\PBinStr{r(n)}\rightarrow\Bool)\rightarrow\PBinStr{p(n)})\rightarrow (\PBinStr{r(n)} \rightarrow \Bool )\times\PBinStr{p(n)}
\end{align*}
But how could the time complexity of the
three algorithms above be defined?  While polynomial time computability for the
function $\mathtt{P}$ and the authentication $\mathtt{MAC}$ can be
captured in a standard way using, e.g., oracle Turing machines, the
same cannot be said about $\mathcal{A}$. How to, e.g., appropriately account for the
time $\mathcal{A}$ needs to ``cook'' a function $f$ in $\PBinStr{n}\rightarrow\Bool$
to be passed to its argument functional? So, appealing as it is, our objective
of studying higher-order forms of cryptography is bound to be nontrivial,
even from a purely \emph{definitional} point of view.

The contributions of this paper can be now described at a more
refined level of detail, as follows:
\begin{itemize}
\item
  On the one hand, we give a \emph{definition} of a polynomial-time
  higher-order probabilistic algorithm whose time complexity depends on
  a global security parameter and which is captured by games and
  strategies, in line with game
  semantics~\cite{HO00,AJM00}. This allows to discriminate satisfactorily
  between efficient and non-efficient adversaries, and accounts for
  the complexity of first-and-second-order algorithms consistently
  with standard complexity theory. 
\item
  On the other hand, we give some positive and negative results about the possibility
  of designing second-order cryptographic primitives, and in particular
  pseudorandom functions and authentication schemes. In particular we prove, by
  an essentially information-theoretic argument, that secure deterministic second-order
  authentication schemes cannot exist. A simple
  and direct reduction argument shows that a more restricted form of
  pseudorandom function exists under standard cryptographic assumptions.
  Noticeably, the adversaries we prove the existence of are of a very restricted
  form, while the ones which we prove impossible to build are quite general.
\end{itemize}
\section{Higher-Order Probabilistic Polynomial Time Through Parametrized Games}\label{sect:hocomplexity}
In this section, we introduce a framework inspired by game semantics
in which one can talk about the efficiency of probabilistic
higher-order programs in presence of a global security
parameter. While the capability of interpreting higher-order programs
is a well-established feature of game semantics, dealing \emph{at the
  same time} with probabilistic behavior \emph{and} efficiency constraints
has---to the best of the authors' knowledge---not been considered so
far. The two aspects have however been tackled
\emph{independently}. Several game models of probabilistic languages
have been introduced: we can cite here for instance the fully abstract
model of probabilistic Idealized Algol by Danos and
Harmer~\cite{DanosHarmer02}, or the model of probabilistic
\textsf{PCF} by Clairambault at al.~\cite{CP18}. About efficiency, we
can cite the work by F\'er\'ee~\cite{Feree17} on higher-order
complexity and game semantics, in which the cost of evaluating
higher-order programs is measured parametrically on the size of
\emph{all} their inputs, including functions, thus in line with
type-two complexity~\cite{CookKapron89}. We are instead interested in
the efficiency of higher-order definitions with respect to the
\emph{security parameter}. Unfortunately, existing probabilistic game
models do not behave well when restricted to capture feasibility:
\emph{polytime computable} probabilistic strategies in the spirit of
Danos and Harmer do not \emph{compose} (see the Extended Version of
this paper for more details).

Contrary to most works on game
semantics, we do not aim at building a model of a \emph{particular}
programming language, but we take game semantics as our model of
compuation. As a consequence, we are not bound by requirements to
model particular programming features or to reflect their
discriminating power.

We present our game-based model of computation in three steps: first
we define a category of deterministic games and strategies called
$\PG$---for \emph{parametrized games}---which capture computational
agents whose behavior is parametrized by the security parameter. This
model ensures that the agent are \emph{total}: they \emph{always}
answer any request by the opponent. In a second step, we introduce
$\PolyPG$, as a sub-category of $\PG$ designed to model those agents
whose time complexity is \emph{polynomially bounded} with respect to the security
parameter. Finally, we deal with \emph{randomized 
  agents} by allowing them to interact with a \emph{probabilistic
  oracle}, that outputs (a bounded amount of) random bits.
\subsection{Parametrized Deterministic Games}
Our game model has been designed so as to be
able to deal with security properties that---as exemplified by
\emph{computational indistinguishability}---are expressed by looking at the behavior
of adversaries \emph{at the limit}, i.e.,~when the security parameter
tends towards infinity. The \emph{agents} we consider
are actually \emph{families} of functions, indexed by the security
parameter. As such, our game model can be seen as a parametrized
version of Wolverson's simple games~\cite{Wolverson09}, where the set
of plays is replaced by a \emph{family} of sets of plays, indexed by
the natural numbers. Moreover, we require the total length of any
interaction between the involved parties to be polynomially bounded
in the security parameter.

We need a few preliminary definitions before delving into
the . Given two sets $X$ and $Y$, we define $\alts{X}{Y}$ as
$
\{(a_1,\ldots, a_n) \mid a_{i} \in X \text{ if $i$ is odd},
a_{i} \in Y \text{ if $i$ is even} \}
$, i.e., as the set of finite alternating sequences whose first element
is in $X$. Given any set of sequences $X$, $\Odd{X}$ (respectively,
$\Even{X}$) stands for the subset of $X$ containing the sequences of
odd (respectively, even) length.  \newcommand{\len}[1]{|#1|} From now
on, we implicitly assume that any (opponent or player) move $m$ can be
faithfully encoded as a string in an appropriate, fixed alphabet.
This way, moves and plays implicitly have a length, that we will note
$\len{\cdot}$.  We fix a set $\pols$ of unary polynomially-bounded
total functions on the natural numbers, which includes the identity
$\idpol$, base-$2$ logarithm $\ulg$, addition, multiplication,
and closed by composition. $\pols$ can be equipped with the pointwise
partial order: $p\leq q$ when $\forall n \in \NN, p(n) \leq q(n)$.
\begin{definition}[Parametrized Games]
A \emph{parametrized game} $\gameone = (\opponent \gameone, \player
\gameone, \legals \gameone)$ consists of sets $\opponent \gameone$,
$\player \gameone$ of opponent and player moves, respectively,
together with a family of non-empty prefix-closed sets $\legals
\gameone=\{\legalspar{\gameone}{n}\}_{n\in\NN}$, where
$\legalspar{\gameone}{n}\subseteq\alts{\opponent{\gameone}}{\player{\gameone}}$,
such that there is $p\in\pols$ with
$\forall n \in \NN.\forall s\in\legalspar{\gameone}{n}.\len{s}\leq p(n)$.
The union of $\opponent\gameone$ and $\player\gameone$ is indicated
as $\moves\gameone$.
\end{definition}
For every $n\in\NN$, $\legalspar{\gameone}{n}$ represents the set of legal plays,
when $n$ is the current value of the security parameter. Observe that the first move
is always played by \emph{the opponent}, and that for any fixed value of
the security parameter $n$, the length of legal plays is bounded by
$p(n)$, where $p\in\pols$.

\newcommand{\UnitGame}{\textbf{1}}
\newcommand{\BoolGame}{\mathbf{B}}
\newcommand{\OracleGame}[1]{\mathbf{O}^{#1}}
\begin{example}[Ground Games]\label{exa:groundgames}
  We present here some games designed to model \emph{data-types} in
  our computation model.  The simplest game is probably the \emph{unit
    game} $\UnitGame = (\{?\}, \{*\}, \{\legalspar{\UnitGame}{n}\}_{n
    \in \NN})$ with just one opponent move and one player move, where
  $\legalspar{\UnitGame}{n}=\{\varepsilon,?,?*\}$ for every $n$. Just
  slightly more complicated than the unit game is the \emph{boolean
    game} $\BoolGame$ in which the two moves $0$ and $1$ take the place
  of $*$.
  In the two games introduced so far, parameterisation is not really
  relevant, since
  $\legalspar{\gameparone}{n}=\legalspar{\gameparone}{m}$ for every
  $n,m\in\NN$.  The latter is not true in
  $\Strpar p = (\{?\}, \{0,1\}^\star, \{\legalspar{\Strpar p}{n}\}_{n
    \in \NN})$ with $\legalspar{\Strpar p}{n} = \{\varepsilon, ?\}
  \cup\{? s \mid \lvert s \rvert = p(n)\}$, which will be our way
  of capturing strings. A slight variation of $\Strpar{p}$ is
  $\LTStrpar{p}$ in which the returned string can have length
  \emph{smaller or} equal to $p(n)$.
\end{example}
\begin{example}[Oracle Games]
  As another example, we describe how to construct \emph{polynomial boolean
    oracles} as games. For every polynomial $p\in\pols$
  we define a game $\OracleGame{p}$ as $\OracleGame{p} =
  (\{?\},\{0,1\},\{\legalspar{\OracleGame{p}}{n}\}_{n \in \NN} ) $ with
  $$\legalspar
  {\OracleGame{p}}{n} = \{?\} \cup \{ ?b_1? b_2 \ldots ?b_m\mid b_i \in \{0,1\}\wedge m\leq p(n)\}
  \cup \{ ?b_1? b_2 \ldots ?b_m ? \mid b_i \in \{0,1\}\wedge m<p(n) \}.
  $$
\end{example}
Our oracle games are actually a special case of a more general
construction, that consists of taking any game $\gameone$, and a
polynomial $p$, and to build a game which consists in playing $p(n)$
times with the game $\gameone$.
\begin{definition}[Bounded Exponentials]
  Let $\gameone ={(\opponent \gameone, \player \gameone, \legals
    \gameone)} $ be a parametrized game. For every $p \in \pols$, we define
  a new parametrized game $\expar p \gameone :=(\opponent {\expar p \gameone},
  \player {\expar p \gameone}, \legals {\expar p \gameone})$
  as follows:
\begin{itemize}
\item
  $ \opponent {\expar p \gameone} = \NN \times \opponent \gameone$,
  and $\player {\expar p \gameone} = \NN \times \player \gameone$;
\item
  For $n \in \NN$, $\legalspar {\expar p \gameone} n$ is the set of those
  plays $\playone \in \alts{\opponent{\expar p \gameone} }{\player{\expar p
      \gameone}} $ such that:
  \begin{itemize}
  \item
    for every $i$, the $i$-th projection $\playone_i$ of $\playone$
    is in $\legalspar{\gameone}{n}$.
  \item
    if a move $(i+1,z)$ appears in $\playone$, then a move $(i,x)$
    appears at some earlier point of $\playone$, and $i+1 \leq p(n)$.
  \end{itemize}
\end{itemize}
\end{definition}
Observe that the game $\OracleGame{p}$ is isomorphic to the game $\expar p
\BoolGame$---we can build a bijection between legals plays having
the same length.

Games specify \emph{how} agents could play in a certain interactive
scenario. As such, they do not represent \emph{one} such agent, this
role being the one of strategies. Indeed, a strategy on a game is
precisely a way of specifying a \emph{deterministic} behavior of the player,
i.e.~how the player is going to answer to any possible behavior of an
adversary. We moreover ask our strategies to be total, in the sense
that the player cannot refuse to play when it is their turn.
\begin{definition}[Strategies]
A \emph{strategy} on a parametrized game
$\gameone = (\opponent \gameone,
\player \gameone, \legals \gameone)$
consists of a family
$\stratone=\{\stratone_n\}_{n\in\NN}$, where $\stratone_n$ is
a partial function from $\Odd{\legalspar \gameone n}$ to ${\player \gameone}$
such that:
\begin{itemize}
\item
  for every $\playone\in\Odd{\legalspar \gameone n}$, if
  $\stratone_n(\playone)=\movone$ is defined, then
  $\playone\movone\in\legalspar{\gameone}{n}$;
\item
  $\playone \movone \movtwo \in {\dom {\stratone_n}}$ implies that
  $\movone = {\stratone_n(\playone)}$;
\item
  $\playone \in \caracplay{\stratone}_n \, \wedge\, \playone \movone \in \legalspar \gameone n \Rightarrow \playone \movone \in \dom{\stratone_n} $.
\end{itemize}
If $\stratone$ is a strategy, we define the \emph{set of plays
  caracterising $\stratone$}, that we note $\caracplay \stratone$, as
the family $\{\caracplay{\stratone}_n\}$ where
$\caracplay{\stratone}_n = \{\varepsilon\} \cup \{\playone \stratone_n(\playone) \mid
\playone \in \dom \stratone \} \subseteq \legalspar{\gameone}{n}$.
\end{definition}
Any strategy $\stratone$ is entirely caracterized by its set
of plays $\caracplay \stratone$. Intuitively, a strategy stands for a
\emph{deterministic recipe} that informs the player on how to respond
to \emph{any} behavior of the opponent. As such, it does not need to
be effective. 

Up to now, the games we have described are such that their strategies
are meant to represent concrete data: think about how a strategy for, e.g.,
$\BoolGame$ or even $\OracleGame{p}$ could look like. It is now time to build
games whose strategies are \emph{functions}, this being the
following construction on games:
\begin{definition}[Constructing Functional Games]
The game $\gameone \arrow \gametwo$ is given as $\opponent{\gameone
  \arrow \gametwo} = \player \gameone + \opponent \gametwo$,
$\player{\gameone \arrow \gametwo} = \opponent \gameone + \player
\gametwo$, and $\legals{\gameone \arrow \gametwo} = \{\playone \in
\alts{\opponent{\gameone \arrow \gametwo}}{\player{\gameone \arrow
    \gametwo}} \mid \playone_0 \in \legals \gameone, \playone_1 \in
\legals \gametwo \} $.
\end{definition}
\begin{example}\label{ex:oraclestrategies}
  As an example, we look at the game $\OracleGame{p} \arrow \Strpar p$, which
  captures functions returning a string of size $p(n)$ after
  having queried a binary oracle at most $p(n)$ times. First, observe
  that:
  $$\OracleGame{p} \arrow \Strpar p = (\{?_{\Strpar p}, 0,1\}, \{?_{\OracleGame{p}}\}
  \cup \{0,1\}^\star, (\legalspar n {\OracleGame{p} \arrow \Strpar p})_{n \in
    \NN}),
  $$
  where $\legalspar n {\OracleGame{p} \arrow \Strpar p}$ is generated
  by the following grammar:
\begin{align*}
  & q \in \legalspar n {\OracleGame{p} \arrow \Strpar p} \bnf ?_{\Strpar p} \midd   ?_{\Strpar p} o \midd  ?_{\Strpar p}  e \midd  ?_{\Strpar p} e  s \qquad s \in \{0,1\}^{\star} \text{ with } \len s \leq p(n) \\
  & e \bnf \epsilon \midd ?_{\OracleGame{p}} b_1 \ldots ?_{\OracleGame{p}} b_m \qquad   o \bnf  ?_{\OracleGame{p}} \midd ?_{\OracleGame{p}} b_1 \ldots ?_{\OracleGame{p}} b_{m-1} ?_{\OracleGame{p}} \qquad b_i \in \{0,1\}, m \leq p(n)
\end{align*}
Of course there are \emph{many} strategies for this game, and
we just describe two of them here which both make use of the oracle: the
first one---that we will call $\stracopy$ asks
for a random boolean to the oracle, and when this boolean is $0$,
returns the string $0^n$, and returns the string $1^n$
otherwise. It is represented in Figure~\ref{ex:strat_s1}. The second
strategy---denoted $\stracat$, and represented in
Figure~\ref{ex:strat_s2}---generates a random key of
length $p(n)$ by making $p(n)$ calls to the probabilistic oracle.
\begin{figure}
  \centering
  \fbox{\begin{minipage}{.97\textwidth}
   \begin{center}
    \begin{subfigure}[b]{0.4\textwidth}
\begin{center}
  \begin{tikzpicture}[scale=0.6]
    \node (t1) at (1,1){$\OracleGame{p}$};
    \node (t2) at (3,1){$\arrow$};
    \node (t3) at (5,1){$\Strpar p$};
    \node (o1) at (0,0){O};
    \node (o2) at (0,-2){O};
    \node (mo1) at (5,0){$?_{\Strpar p}$};
     \node (mo2) at (1,-2){$b$};
     \node (p1) at (0,-1){P};
     \node (mp1) at (1,-1){$?_{\OracleGame{p}}$};
     \node (p2) at (0,-3){P};
      \node (mp2) at (5,-3){$b^n$};
  \end{tikzpicture}
\end{center}
  \caption{The strategy $\stracopy$.}\label{ex:strat_s1}
\end{subfigure}
    \begin{subfigure}[b]{0.4\textwidth}
\begin{center}
  \begin{tikzpicture}[scale=0.6]
    \node (t1) at (1,1){$\OracleGame{p}$};
    \node (t2) at (3,1){$\arrow$};
    \node (t3) at (5,1){$\Strpar p$};
    \node (o1) at (0,0){O};
    \node (mo1) at (5,0){$?_{\Strpar p}$};
     \node (p1) at (0,-1){P};
     \node (mp1) at (1,-1){$?_{\OracleGame{p}}$};
\node (o2) at (0,-2){O};
     \node (mo2) at (1,-2){$b_1$};

 \node at (0,-3){P};
     \node at (1,-3){$?_{\OracleGame{p}}$};
\node  at (0,-4){O};
     \node  at (1,-4){$b_2$};
      \node at (0,-5){$\vdots$};
     \node (p2) at (0,-6){P};
      \node (mp2) at (5,-6){$b_1 \ldots b_{p(n)}$};
  \end{tikzpicture}
\end{center}
\caption{The strategy $\stracat$}\label{ex:strat_s2}
    \end{subfigure}\end{center}
\end{minipage}}
    \caption{Two different strategies on the game $\OracleGame{p} \arrow \Strpar p$}
    \end{figure}
\end{example}
\longv{
\begin{example}
In this example, we look at the way of \emph{weakening} our
probabilistic oracle: more precisely, if we have an oracle that makes
available $p(n)$ booleans--for each value of the security parameter
$n$--then we can build from it an oracle that gives only $q(n)$
booleans, when $q \leq p$, by taking only the $q(n)$ first iteration
of our original oracle. We do this construction formally in a slighlty
more general setting: for any game $\gameone \in SG$, we build a
canonical strategy $w^\gameone_{q,p}$ on the game $!_p \gameone \arrow
!_q \gameone$.
\end{example}}

We now look at how to \emph{compose} strategies: given a strategy on $\gameone
\arrow\gametwo$, and $\gametwo\arrow\gamethree$, we want to build a strategy on $\gameone\arrow
\gamethree$ that combines them. We define composition as in~\cite{Wolverson09},
except that we need to take also in count the security parameter $n$.
\begin{definition}[Composition of Strategies]
  Let $\gameone,\gametwo,\gamethree$ be parametrized games, and let $\stratone$, $\strattwo$ be two
  strategies on $\gameone\arrow\gametwo$ and $\gametwo\arrow\gamethree$ respectively. We first
  define the set of \emph{interaction sequences of $\stratone$ and
    $\strattwo$} as:
  $$
  (\stratone \interleave \strattwo)_n = \{\playone \in (\moves \gameone +
  \moves \gametwo + \moves\gamethree)^\star \mid \restr \playone \gameone \gametwo\in
  \caracplay \stratone_n \wedge \restr \playone\gametwo\gamethree \in \caracplay \strattwo_n
  \},
  $$
  From there, we define the composition of $\stratone$ and
  $\strattwo$ as the unique strategy $\stratone;\strattwo$ such that:
  $$
  \caracplay{\stratone;\strattwo}_n = \{ \restr \playone\gameone\gamethree\mid \playone
  \in (\stratone \interleave \strattwo)_n\}.
  $$
\end{definition}
We can check that $\stratone;\strattwo$ is indeed a strategy on $A \arrow C$,
and that moreover the composition seen as an operation on strategies
is associative and admits an identity. We cat thus define $\PG$ as the category
whose objects are parametrized games, and whose set of morphisms
$\PG(\gameone,\gametwo)$ consists of the parametrized strategies on the game
$G \arrow H$.
\subsection{Polynomially Bounded Parametrized Games}

Parametrized games have been defined so as to have polynomially
bounded \emph{length}. However, there is no guarantee on the
effectiveness of strategies, i.e., that the next player move,
can be computed algorithmically from the history, uniformly in
the security parameter. This can be however tackled by considering
a subcategory of $\PG$ in which strategies are not merely
functions, but can be computed efficiently:
\begin{definition}[Polytime Computable Strategies]
  Let $\gameone$ a parametrized game in $\PG$, and $\stratone$ be a strategy on
  $\gameone$. We say that $\stratone$ is \emph{polytime computable} when
  there exists a polynomial time Turing machine which on the entry
  $(1^n, \playone)$ returns $\stratone(n)(\playone)$.
\end{definition}
All strategies we have given as examples in the previous
section are polytime computable. For example, the two strategies
from Example \ref{ex:oraclestrategies} are both computable in
linear time.
\begin{proposition}[Stability of Polytime Computable Strategies]
  Let $\gameone, \gametwo, \gamethree$ be polynomially bounded games.
  If $\stratone, \strattwo$ are polytime computable strategies,
  respectively on $\gameone \arrow \gametwo$, and $\gametwo \arrow
  \gamethree$, then $\stratone;\strattwo$ is itself a polytime
  computable strategy. 
\end{proposition}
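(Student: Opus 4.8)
The plan is to exhibit a single polynomial-time Turing machine that, on input $(1^n,\playone)$ with $\playone\in\Odd{\legalspar{\gameone\arrow\gamethree}{n}}$, returns $(\stratone;\strattwo)_n(\playone)$. By the definition of composition this value is the last move of the interaction sequence $u\in(\stratone\interleave\strattwo)_n$ whose restriction $\restr u\gameone\gamethree$ extends $\playone$ by exactly one player move. The point to overcome is that $u$ also records the moves played in the hidden middle game $\gametwo$, which are erased by $\restr{\cdot}\gameone\gamethree$ and hence do not appear in $\playone$; so the machine must reconstruct them, and it will do so by simulating the dialogue that $\stratone$ and $\strattwo$ carry out through $\gametwo$.

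First I would spell out this simulation. Recall the polarities: a move that $\stratone$ plays in $\gametwo$ is a $\player\gametwo$-move of $\gameone\arrow\gametwo$, hence an opponent move of $\gametwo\arrow\gamethree$ for $\strattwo$, and symmetrically a $\gametwo$-move of $\strattwo$ is an opponent move for $\stratone$; the two strategies thus hand control back and forth through $\gametwo$. Starting from $u:=\varepsilon$, the machine scans $\playone$ from left to right. On reading an opponent move of $\gameone\arrow\gamethree$ (one lying in $\player\gameone$ or $\opponent\gamethree$) it appends the move to $u$ and enters an internal loop: it forms the restriction relevant to whichever side is to move ($\restr u\gameone\gametwo$ for $\stratone$, or $\restr u\gametwo\gamethree$ for $\strattwo$), invokes the corresponding polytime machine on it, and appends the returned move to $u$; while that move lies in $\gametwo$ the loop repeats, and as soon as it lies in $\gameone$ or $\gamethree$ the loop stops, that external move being the current answer. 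Totality of $\stratone$ and $\strattwo$ guarantees that a move is always produced, so the loop never blocks. The external answers obtained while scanning the proper prefix of $\playone$ must coincide with the player moves already present in $\playone$, and the external move produced after the final (opponent) move of $\playone$ is the sought output.

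The main obstacle will be showing that this hidden interaction both terminates and stays polynomially short. Throughout the run we have $\restr u\gameone\gametwo\in\caracplay\stratone_n\subseteq\legalspar{\gameone\arrow\gametwo}{n}$ and $\restr u\gametwo\gamethree\in\caracplay\strattwo_n\subseteq\legalspar{\gametwo\arrow\gamethree}{n}$, so by polynomial-boundedness of these two games their lengths are bounded by fixed polynomials $p_{12}(n)$ and $p_{23}(n)$. Since every move of $u$ belongs to $\gameone$, $\gametwo$ or $\gamethree$, and therefore contributes to at least one of these two restrictions, it follows that $|u|\le p_{12}(n)+p_{23}(n)$. In particular the number of $\gametwo$-moves is bounded, so each internal loop terminates and no infinite internal chatter is possible. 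This is the only place where polynomial-boundedness is genuinely used: in unrestricted simple games one would need a separate well-foundedness argument to exclude divergent middle interaction, whereas here the two length bounds combine to cap $|u|$ for free.

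Finally I would account for the running time. The total number of iterations equals $|u|\le p_{12}(n)+p_{23}(n)$, a polynomial in $n$. Each iteration forms a restriction of $u$ (linear in $|u|$, hence polynomial), makes one call to the polytime machine for $\stratone$ or $\strattwo$ on an argument of length at most $|u|$ (which runs in time polynomial in $n$ by hypothesis, $n$ being supplied in unary), and performs polynomial bookkeeping to tag moves by component, test polarities, and compare against $\playone$. Summing a polynomial number of polynomial-time steps yields an overall running time polynomial in $n$, which is exactly what polytime computability of $\stratone;\strattwo$ demands.
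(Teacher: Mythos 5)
Your argument is correct; the paper itself states this proposition without proof, and your simulation of the hidden dialogue through $\gametwo$ is the standard argument it tacitly relies on. In particular you correctly identify the crux: because every move of an interaction sequence survives into at least one of the two restrictions, the polynomial bounds on legal plays of $\gameone \arrow \gametwo$ and $\gametwo \arrow \gamethree$ cap the interaction length by $p_{12}(n)+p_{23}(n)$, which at once rules out infinite internal chattering (the usual threat to totality of composites in game semantics) and bounds the number of calls, each on a polynomially sized argument, to the polytime machines for $\stratone$ and $\strattwo$.
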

We cat thus write $\PolyPG$ for the the sub-category of $\PG$ whose
objects are paramterized games, and whose morphisms are polytime
computable strategies.  \hide{
\subsection{Oracle Games}
We could define probabilistic strategies as follows:
\begin{itemize}
\item Games: $ \gameone =(\opponent \gameone,
  \player \gameone, (\legalspar \gameone n)_{n \in \NN}) $
\item Probabilistic Strategies: $\stratone : \NN \rightarrow (\legalspar \gameone n \cap \text{Odd} \rightarrow \distrs{\player \gameone} )$
\end{itemize}
Unfortunately, this is not stable by composition. Our solution:
\begin{definition}[The sub-category $\obpgames$]
\begin{itemize}
    \item Objects: parametrized \textbf{bounded} games
    \item Morphisms $\gameone \rightarrow \gametwo$: $(p,\stratone)$ with:
      \begin{itemize}
      \item $p \in \pols$
      \item $\stratone$ \textbf{polytime} computable on  $\expar p \BB \arrow (\gameone \arrow \gametwo)$ .
      \end{itemize}
\end{itemize}
\end{definition}
This forms a category with some nice constructions.
}

Let us now consider $\texttt{MAC}$ from Section~\ref{sect:why}. Its type
can be turned into the parametrized game
$\Strpar{\idpol}\arrow !_{q}(\Strpar{\idpol}\arrow\BoolGame)\arrow\Strpar{p}$.
The bounded exponential $!_{q}$ serves to model the fact that the argument
function can be accessed a number of times which is \emph{polynomially
  bounded} on $n$. As a consequence, $\texttt{MAC}$ can only query the argument
function a number of times which is negligibly smaller than the number of possible
queries, itself exponential in $n$. As we will see in Section~\ref{sect:impossibility},
this is the crux in proving security of such a message authentication code
to be unatteinable.
\subsection{Probabilistic Strategies}
Both in $\PG$ and in $\PolyPG$, strategies on any game $\gameone$ are purely
\emph{functions}, and the way they react to stimuli from the environment is
completely deterministic. How could we reconcile all this with our claim that
the framework we are introducing adequately model \emph{probabilistic} higher-order
computation? Actually, one could be tempted to define a notion of \emph{probabilistic
  strategy} in which the underlying function family $\{f_n\}_{n\in\NN}$ is
such that $f_n$ returns \emph{a probability distribution} $f_n(s)$ of player moves when fed
with the history $s$. This, however, would lead to some technical problems
when \emph{composing strategies}: it would be hard to keep the composition
of two efficient strategies itself efficient.

It turns out that a much more convenient route consists, instead, in
defining a \emph{probabilistic} strategy on $\gameone$ simply as a
\emph{deterministic} polytime strategy on
$\OracleGame{p}\arrow\gameone$, namely as an ordinary strategy having
access to polynomially many random bits.  Actually, we have already
encountered strategies of this kind, namely $\stracopy$ and $\stracat$
from Example~\ref{ex:oraclestrategies}. This will be our way of
modeling higher-order probabilistic computations.
\begin{definition}
  Let $\stratone$ be a strategy on
  the game $\OracleGame{p} \arrow \BoolGame$.
  For every $b\in\Bool$, we define the
  \emph{probability of observing $b$ when
    executing $\stratone$} as follows:
  $$
  \Pr(\pconv{\stratone}{n}{b})=
  \sum_{\stackrel{(b_1, \ldots, b_k) \in \Bool^k  }{\text { with } (?_{\BoolGame} \cdot ?_{\OracleGame{p}} \cdot b_1 \ldots ?_{\OracleGame{p}} \cdot b_k\cdot b)  \in (\caracplay \stratone)_n }} \frac{1}{2^k}
  $$
\end{definition}

Given a probabilistic strategy $\stratone$ on $\gameone$ (i.e. a strategy
on $\OracleGame{q}\arrow\gameone$) and $p\in\pols$, we indicate
as $!_{p}\stratone$ the strategy in which $p(n)$ copies of $\stratone$ are played,
\emph{but in which} randomness is resolved just once and for all, i.e. $!_{p}\stratone$
is a strategy on $\OracleGame{q}\arrow!_{p}\gameone$
  
\longv{
We want now to be able to \emph{observe} the result of a
computation. Similarly to what is done for instance in~\cite{Wolverson09}, we
observe only at ground types: for instance for the type Bool, the
observables will be a pair $(p_0,p_1)$, where $p_0$ is the probability
that the program returns false, and $p_1$ the probability that the
program returns true. In our model, the ground types are those games
where the only legal plays are the initial question by the opponent,
followed by a terminal answer of the player.

\begin{definition}
  Let $\gameone$ a game in $SG$. We say that $\gameone$ is \emph{observable} when for every $n \in \NN$:
  $$\legalspar \gameone n \subseteq \{?_\gameone\} \cup \{ ?_\gameone r \mid r \in \player \gameone \}. $$
  \end{definition}
Observe that the games $\BoolGame$ and $\Strpar p$ that we considered in Example~\ref{exa:groundgames} are indeed observable games. However, it is not the case of the game build using the $\arrow$ construct, for instance $\Strpar p \arrow \Strpar {2p}$.

Recall that in our framework, a \emph{probabilistic} program of some ground type $\gameone$ is actually represented by a strategy $\stratone$ on the game $\OracleGame{p} \arrow \gameone$, for some polynomials $p$. At this point, it should be noted that this polynomial $p$ is somehow arbitrary, since if we take a polynomial $q \geq p$, we can easily transform $\stratone$ in a strategy on $\OracleGame{q} \arrow \gameone$, using the strategy on $\OracleGame{q} \arrow \OracleGame{p}$ highlighted in Example~\ref{}. In order to recover information about the probabilistic behavior of the program, we compute the probability that the return value will be $r$: to do that we look at all the possible sequences of booleans outputed by the oracle that lead to the player returnin $r$ under the strategy $\stratone$, and then we suppose that those sequences are uniformly distributed, in a probabilistic sense. 
\begin{definition}
  Let $\gameone$ be an observable game, and $\stratone$ a strategy on the game $\OracleGame{p} \arrow \gameone$. For any $r \in \player \gameone$, we define the \emph{probability of observing $r$ when executing $\stratone$} as the function $\probobs \stratone r : \NN \rightarrow [0,1]$ such that
    $$
    n \mapsto \sum_{k \leq p(n)} \sum_{\stackrel{(b_1, \ldots, b_k) \in \{0,1\}^k  }{\text { with } (? \cdot ?_\BB \cdot b_1 \ldots ?_\BB \cdot b_k\cdot r)  \in (\caracplay \stratone)_n }} \left(\frac 1 2 \right)^k
   $$
  \end{definition}
Observe that if $\stratone$ is a strategy on $\OracleGame{p} \arrow \stratone$,
and $w_{q,p}$ the canonical strategy on $\OracleGame{q} \arrow \OracleGame{p}$--as defined
in Example~\ref{}, then for every value $r$, $\probobs \stratone r =
\probobs{w_{q,p}; \stratone} r$: it means that the probability of
observing $r$ is independant of the choice of the polynomial
oracle--as soon as it produces \emph{at least} as many random booleans
as asked by the agent.}
\section{The (In)feasibility of Higher-Order Cryptography}\label{sect:impossibility}
In this section, we will give both negative and positive results about the possibility
of defining a deterministic polytime strategy for the game
$\Strpar{\idpol}\arrow !_{q}(\Strpar{r}\arrow\BoolGame)\arrow\Strpar{p}$
which could be used to authenticate functions. When $r$ is linear, this is provably impossible, as
proved in Section~\ref{sect:compression} below. When, instead, $r$ is logarithmic (and
$q$ is at least linear), a positive result can be given, see Section~\ref{sect:positive}.

\begin{wrapfigure}{r}{.51\textwidth}
  \fbox{
    \begin{minipage}{.495\textwidth}
      \begin{center}
        \begin{tikzpicture}[scale=0.57]
          \node (t1) at (1,1){$!_{q}(\Strpar{r}$};
          \node (t2) at (3,1){$\arrow$};
          \node (t3) at (5,1){$\BoolGame)$};
          \node (t4) at (7,1){$\arrow$};
          \node (t5) at (9,1){$\Strpar{p}$};
          \node (o1) at (-1,0){O};
          \node (mo1) at (9,0){$?_{\Strpar{p}}$};
          \node (p1) at (-1,-1){P};
          \node (mp1) at (5,-1){$(1,?_{\BoolGame})$};
          \node (o2) at (-1,-2){O};
          \node (mo2) at (1,-2){$(1,?_{\Strpar{r}})$};
          \node (p3) at (-1,-3){P};
          \node (mo2) at (1,-3){$(1,s_1)$};
          \node (o4) at (-1,-4){O};
          \node (mo4) at (5,-4){$(1,t_1)$};
          \node (p) at (3,-5){$\vdots$};
          \node (p6) at (-1,-6){P};
          \node (mp6) at (5,-6){$(m,?_{\BoolGame})$};
          \node (o7) at (-1,-7){O};
          \node (mo7) at (1,-7){$(m,?_{\Strpar{r}})$};
          \node (p8) at (-1,-8){P};
          \node (mo2) at (1,-8){$(m,s_m)$};
          \node (o9) at (-1,-9){O};
          \node (mo9) at (5,-9){$(m,t_m)$};
          \node (p10) at (-1,-10){P};
          \node (mp10) at (9,-10){$v$};    
        \end{tikzpicture}
      \end{center}
  \end{minipage}}
  \caption{Plays (of maximal length) for the game $!_{q}(\Strpar{r}\arrow\BoolGame)\arrow\Strpar{p}$ }\label{fig:playssof}
\end{wrapfigure}
But how would a strategy for the game
$!_{q}(\Strpar{r}\arrow\BoolGame)\arrow\Strpar{p}$ look like? Plays for
this game are as in Figure~\ref{fig:playssof}. A strategy for such
a game is required to determine the value of the query $s_{i+1}\in\PBinStr{r(n)}$ based
on $t_{1},\ldots,t_{i}\in\Bool$. Moreover, based on
$t_1,\ldots,t_m$ (where $m\leq q(n)$), the strategy should be able to produce the value
$v\in\PBinStr{p(n)}$. Strictly speaking, the strategy should also be able to respond
to a situation in which the opponent directly replies
to a move $(i,?_{\BoolGame})$ by way of a truth value $(i,t_i)$, without
querying the argument. This is however a signal that the agent
with which the strategy is interacting represents a \emph{constant} function,
and we will not consider it in the following.

The way we will prove authentication impossible when $r$ is linear consists
in showing that since $q$ is polynomially bounded (thus negligibly smaller
than the number of possible queries of type $\Strpar{r}$ any function is allowed
to make to its argument), there are many argument functions
$\PBinStr{r(n)}\rightarrow\Bool$ which are indistinguishable, and would thus receive
the same tag. In Section~\ref{sect:compression}, we prove
that the (relatively few) influential coordinates can be efficiently determined.
\subsection{Efficiently Determining Influential Variables}\label{sect:influential}
A key step towards proving our negative result comes from the theory
of influential variables in decision trees. In this section, we are going
to give some preliminary definitions about it, without any aim at being
comprehensive (see, e.g., \cite{ODSS05}).

The theory of influential variables is concerned with boolean functions,
and with their statistical properties. The first definitions we need, then,
are those of variance and influence (from now on, metavariables like
$\parone,\partwo,\parthree$ stand for natural number unrelated to
the security parameter, unless otherwise specified). Given a natural number $N\in\NN$, $[N]$ denotes the set
$\{1,\ldots,N\}$.  Whenever $j\in[\parone]$,
$e_j\in\PBinStr\parone$ is the sequence which is everywhere $0$ except
on the $j$-th component, in which it is $1$.
\begin{definition}[Variance and Influence]
For every distribution $\distrone$ over $\PBinStr{\parone}$, and
$\bfunone:\PBinStr\parone \rightarrow\Bool$, we write
$\Var{\distrone}{\bfunone}$ for the value
$\ev{\bfunone(\distrone)^2} - \ev{\bfunone(\distrone)}^2= \Pr_{x,y\sim D}(\bfunone(x) \neq \bfunone(y))$, called the \emph{variance of $\bfunone$ under $\distrone$.}
For every distribution $\distrone$ over $\PBinStr\parone$,
$\bfunone:\PBinStr\parone \rightarrow\Bool$, and $j \in [N]$, we
define the \emph{influence of $j$ on $\bfunone$ under $\distrone$},
written $\Infl{\distrone}{j}{\bfunone}$, as
$\Pr_{x \sim D}[\bfunone(x) \neq \bfunone(x \oplus e_j)]$.
\end{definition}
The quantity $\Infl{\distrone}{j}{\bfunone}$ measures how much,
on the average, changing the $j$-th input to $\bfunone$ influences
its output. If $\bfunone$ does not depend too much on the $j$-th input,
then $\Infl{\distrone}{j}{\bfunone}$ is close to $0$, while it is
close to $1$ when switching the $j$-th input has a strong effect on the
output.
\begin{example}\label{ex:parityfun}
  Let $\mathit{PARITY}_\parone:\PBinStr\parone\rightarrow\Bool$ be the parity function
  on $\parone$ bits. It holds that
\begin{align*}
  \Infl{\distrone}{j}{\mathit{PARITY}_\parone}&=
  \Pr_{x \sim \distrone}[\mathit{PARITY}_\parone(x) \neq \mathit{PARITY}_\parone(x \oplus e_j)]\\
  &=\sum_{x}\distrone(x)\cdot |\mathit{PARITY}_\parone(x)-\mathit{PARITY}_\parone(x \oplus e_j)|=
  \sum_{x}\distrone(x)=1
\end{align*}
\end{example}
If $\bfunone:A \rightarrow \PBinStr L$, and $t\in [L]$, we define
$\bfunone_t:A \rightarrow \Bool$ to be the function that on input $x\in
A$ outputs the $t$-th bit of $\bfunone(x)$.

The kind of distributions over $\PBinStr\parone$ we will be mainly
interested are the so-called \emph{semi-uniform} ones, namely those in which
 some of the $\parone$ bits have a fixed value, while the others
take all possible values with equal probability. It is thus natural
to deal with them by way of partial functions.
  For every partial function $\funone:[N]
\rightharpoonup \Bool$ we define $\dom{\funone}\subseteq [N]$ to be
the set of inputs on which $\funone$ is defined, and
$\unif{\funone}$ to be the uniform distribution of $x$ over $\PBinStr N$
\emph{conditioned on} $x_j = g(j)$ for every $j \in\dom{g}$, i.e., the
distribution defined as follows:
$$
\unif{\funone}(x)=
\left\{
\begin{array}{ll}
  \frac{1}{2^{N-|\dom{g}|}} & \mbox{if $x_j = g(j)$ for every $j \in\dom{g}$};\\
  0 & \mbox{otherwise}.
\end{array}
\right.
$$
If a distribution $\distrone$ can be written as $\unif{g}$, where
$\funone:[N] \rightharpoonup \Bool$, we say that $\distrone$ is an
\emph{$\dom{\funone}$-distribution}, or a \emph{semi-uniform}
distribution. 
Given a distribution $\distrone:\PBinStr\parone\rightarrow\RR_{[0,1]}$,
some index $j\in[\parone]$ and a bit $b\in\Bool$,
the expression $\subst{\distrone}{j}{b}$ stands for the conditioning of
$\distrone$ to the fact that the $j$-th boolean argument is $b$. Note
that if $\distrone$ is an $\nsetone$-distribution and $j\in[\parone]\setminus\nsetone$,
then $\subst{\distrone}{j}{b}$ is an $\nsetone\cup\{j\}$-distribution.

A crucial concept in the following is that of a decision tree, which
is a model of computation for boolean functions in which the
interactive aspects are put upfront, while the merely computational
aspects are somehow abstracted away.
\begin{definition}[Decision Tree]
Given a function $F$, a \emph{decision tree $T$ for $R$} is a finite
ordered binary tree:
\begin{itemize}
\item
  Whose internal nodes are labelled with an index $i\in[\parone]$.
\item
  Whose leaves are labelled with a bit $b\in\Bool$.
\item
  Such that whenever a path ending in a leaf labelled with $b$
  is consistent with $x\in\PBinStr\parone$, it holds that
  $F(x)=b$.
\end{itemize}
The \emph{depth} of any decision tree $T$ is defined
as for any tree.
\end{definition}
\begin{example}
  An example of a decision tree that computes
  the function $\mathit{PARITY}_3:\PBinStr 3\rightarrow\Bool$ defined
  in Example~\ref{ex:parityfun} can be found in Figure~\ref{fig:decisiontree}.
\end{example}
\begin{wrapfigure}{r}{.51\textwidth}
  \fbox{
    \begin{minipage}{.495\textwidth}
      \begin{center}
        \includegraphics{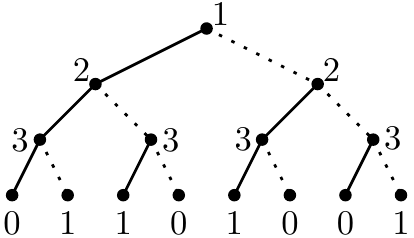}
      \end{center}
  \end{minipage}}
  \caption{A decision tree for $\mathit{PARITY}_3$}\label{fig:decisiontree}
\end{wrapfigure}
The following result, which is an easy corollary of some well-known
result in the literature, allows to put the variance and the influence
in relation whenever the underlying function can be computed by way of
a decision tree of limited depth.
\begin{lemma} \label{lem:infvariable}
  Suppose that $F$ is computable by a decision tree of depth at most
  $q$ and $g:[N] \rightarrow\Bool$ is a partial function. Then there
  exists $j\in [N] \setminus \dom{g}$ such that
  $$
  \Infl{\unif{g}}{j}{\bfunone} \geq \tfrac{\Var{\unif{g}}{\bfunone}}{q}
  $$
\end{lemma}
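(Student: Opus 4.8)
The plan is to reduce the claim to the uniform-measure case and then apply the main inequality of \cite{ODSS05} (the \emph{OSSS inequality}). First I would absorb the conditioning imposed by $g$ into the function itself: let $I=[N]\setminus\dom{g}$ be the set of free coordinates and let $F_g:\PBinStr{|I|}\rightarrow\Bool$ be the function obtained from $F$ by fixing each coordinate $i\in\dom{g}$ to $g(i)$. Pruning, in the given depth-$q$ decision tree for $F$, the forced branch at every node that queries a coordinate of $\dom{g}$ yields a decision tree $T_g$ for $F_g$ of depth at most $q$. Since flipping a free coordinate keeps an input in the support of $\unif{g}$, one checks that $\Var{U}{F_g}=\Var{\unif{g}}{F}$ and $\Infl{U}{j}{F_g}=\Infl{\unif{g}}{j}{F}$ for every $j\in I$, where $U$ is the uniform distribution on $\PBinStr{|I|}$. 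Hence it suffices to produce $j\in I$ with $\Infl{U}{j}{F_g}\geq\Var{U}{F_g}/q$. If $\Var{\unif{g}}{F}=0$ the inequality is trivial, so I assume it is positive, which forces the support of $\unif{g}$ to contain at least two points and hence $I\neq\emptyset$.

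Second, I would invoke the OSSS inequality on the pair $(F_g,T_g)$: writing $\delta_i=\Pr_{x\sim U}[\,T_g\text{ queries coordinate }i\,]$, the theorem of \cite{ODSS05} gives
$$
\Var{U}{F_g}\;\leq\;\sum_{i\in I}\delta_i\cdot\Infl{U}{i}{F_g}.
$$
The convention for influence in \cite{ODSS05} is the bit-flip one used in the present paper, so no stray constant appears.

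Third comes the averaging step that turns the depth bound into a bound on a single coordinate. As no coordinate need ever be queried twice along a root-to-leaf path, the number of distinct coordinates inspected on any input is at most the depth, so $\sum_{i\in I}\delta_i=\ev{\#\{\text{coordinates queried by }T_g\}}\leq q$. Combining this with the previous inequality,
$$
\Var{U}{F_g}\;\leq\;\sum_{i\in I}\delta_i\cdot\Infl{U}{i}{F_g}\;\leq\;\Big(\max_{j\in I}\Infl{U}{j}{F_g}\Big)\sum_{i\in I}\delta_i\;\leq\;q\cdot\max_{j\in I}\Infl{U}{j}{F_g},
$$
and dividing by $q$ and taking $j\in I$ to attain the maximum gives the desired free coordinate.

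The conceptual weight of the argument is carried entirely by the cited OSSS inequality, which I would not reprove; on our side the only genuine tasks are the restriction bookkeeping that confines the influential variable to $I$, the verification that $\sum_{i\in I}\delta_i\leq q$ for the decision-tree model at hand, and the dismissal of the degenerate case $\Var{\unif{g}}{F}=0$. I expect the one point needing real care to be matching the influence convention of \cite{ODSS05} to the flip-based $\Infl{}{}{}$ of this paper, since a mismatched convention would introduce a factor of two (harmlessly, but it should be checked).
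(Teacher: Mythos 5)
Your proposal is correct and follows essentially the same route as the paper, whose entire proof is the single sentence that the lemma is ``basically a corollary of Corollary 1.2 from~\cite{ODSS05}''. Your write-up simply makes explicit what that citation leaves implicit --- the restriction/pruning step that transfers the uniform-measure OSSS statement to the semi-uniform distribution $\unif{g}$, and the averaging via $\sum_i \delta_i \leq q$ --- and these details check out.
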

\begin{proof}
  This is basically corollary of Corollary 1.2 from~\cite{ODSS05}.
\end{proof}

Every decision tree $\dtone$ makes on any input a certain number of
queries, which of course can be different for different inputs.
If $\distrone$ is a distribution, $\nsetone$ is a subset of $[N]$ and $\dtone$ is a decision tree,
we define $\enq{\distrone}{\nsetone}{\dtone}$ as the expectation over $x \sim \distrone$ of
the number of queries that $\dtone$ makes on input $x$ outside of $\nsetone$, which
is said to be \emph{the average query complexity of $\dtone$ on $\distrone$ and $\nsetone$}. The
following result relates the query complexity before and after
the underlying semi-uniform distribution is updated: if we fix the value
of a variable, then the average interactive complexity goes
down (on the average) by at least the influence of the variable:
\begin{lemma} \label{lem:reducedelta}
  For every decision tree $\dtone$ computing a function $\bfunone$, $\nsetone\subseteq
  [\parone]$, $j \in [\parone] \setminus \nsetone$, and $\nsetone$-distribution $\distrone$,
  it holds that
  $$
  \tfrac{1}{2}\enq{\subst{\distrone}{j}{0}}{\nsetone\cup\{j\}}{\dtone}+
  \tfrac{1}{2}\enq{\subst{\distrone}{j}{1}}{\nsetone\cup\{j\}}{\dtone}
  \leq \enq{\distrone}{\nsetone}{\dtone}
  - \Infl{\distrone}{j}{\bfunone}
  $$
\end{lemma}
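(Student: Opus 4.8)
The plan is to reduce the stated inequality to a clean pointwise fact about exactly when the decision tree $\dtone$ is forced to inspect coordinate $j$. First I would set up notation at the level of individual inputs: for $x \in \PBinStr{\parone}$ let $c_{\nsetone}(x)$ denote the number of queries $\dtone$ makes, along the path determined by $x$, at coordinates lying outside $\nsetone$, so that by definition $\enq{\distrone}{\nsetone}{\dtone} = \evp{x \sim \distrone}{c_{\nsetone}(x)}$ and likewise $\enq{\subst{\distrone}{j}{b}}{\nsetone\cup\{j\}}{\dtone} = \evp{x \sim \subst{\distrone}{j}{b}}{c_{\nsetone\cup\{j\}}(x)}$.

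The first key observation exploits that $\distrone$ is an $\nsetone$-distribution and $j \notin \nsetone$: the coordinate $x_j$ is therefore uniform and independent of the other free coordinates under $\distrone$, so $\distrone = \tfrac{1}{2}\,\subst{\distrone}{j}{0} + \tfrac{1}{2}\,\subst{\distrone}{j}{1}$ as a mixture. Consequently, for any function $h$ of $x$ we have $\evp{x\sim\distrone}{h(x)} = \tfrac{1}{2} \evp{x\sim\subst{\distrone}{j}{0}}{h(x)} + \tfrac{1}{2} \evp{x\sim\subst{\distrone}{j}{1}}{h(x)}$. Applying this with $h = c_{\nsetone\cup\{j\}}$ collapses the left-hand side of the lemma to the single expectation $\evp{x\sim\distrone}{c_{\nsetone\cup\{j\}}(x)}$, so the inequality to prove becomes
$$
\evp{x\sim\distrone}{c_{\nsetone}(x) - c_{\nsetone\cup\{j\}}(x)} \;\geq\; \Infl{\distrone}{j}{\bfunone}.
$$
Next I would identify the integrand: since $(\nsetone\cup\{j\})\setminus\nsetone=\{j\}$, the difference $c_{\nsetone}(x) - c_{\nsetone\cup\{j\}}(x)$ counts precisely the queries $\dtone$ makes to coordinate $j$ on input $x$. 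Under the usual convention that a decision-tree path never repeats a query this equals the indicator $\mathbf{1}[\dtone \text{ queries } j \text{ on } x]$ (and in any case dominates it), so it suffices to establish
$$
\Pr_{x\sim\distrone}[\dtone \text{ queries } j \text{ on } x] \;\geq\; \Infl{\distrone}{j}{\bfunone} = \Pr_{x\sim\distrone}[\bfunone(x)\neq\bfunone(x\oplus e_j)].
$$

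This last step is the heart of the argument, and I expect it to be the main (though short) obstacle, as it is where the semantics of decision trees enters. I would prove it by an event-inclusion argument, establishing $\{x : \bfunone(x)\neq\bfunone(x\oplus e_j)\} \subseteq \{x : \dtone \text{ queries } j \text{ on } x\}$ pointwise. Indeed, suppose $\dtone$ does not query $j$ on $x$; then every query along the path for $x$ is at a coordinate where $x$ and $x\oplus e_j$ agree, so $x\oplus e_j$ traverses the identical path and reaches the same leaf, and since $\dtone$ computes $\bfunone$ the common leaf label forces $\bfunone(x)=\bfunone(x\oplus e_j)$. Taking the contrapositive yields the claimed inclusion of events for every $x$, so in particular under the law $\distrone$, and monotonicity of probability then gives the displayed inequality. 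Chaining the three displays — the mixture collapse, the query-count identity, and the inclusion bound — proves the lemma.
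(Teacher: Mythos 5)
Your proposal is correct and follows essentially the same route as the paper's proof: the paper likewise uses the mixture identity $\distrone = \tfrac{1}{2}\subst{\distrone}{j}{0} + \tfrac{1}{2}\subst{\distrone}{j}{1}$ to collapse the left-hand side, reduces the claim to $\ev{I_j} \geq \Infl{\distrone}{j}{\bfunone}$ where $I_j$ indicates that coordinate $j$ is queried, and proves that bound by the same observation that if $j$ is never queried on $x$ then $x$ and $x \oplus e_j$ reach the same leaf. Your write-up is in fact slightly more explicit than the paper's (spelling out why the mixture identity holds for an $\nsetone$-distribution and noting the no-repeated-query convention), but there is no substantive difference in approach.
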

\longv{
  \begin{proof}
    We start by showing that 
    \begin{equation}\label{equ:intermediate}
      \Delta_{D,S \cup\{j\}}(T) \leq \Delta_{D,S}(T) - \Infl{\distrone}{j}{\bfunone}
    \end{equation}
    For every $L \in [\parone]$, let $I_L$ be the 
    random variable that is equal to one if $L$ is queried by $\dtone$
    on $x\sim D$ and equal to $0$ otherwise.  Therefore we can rewrite
    (\ref{equ:intermediate}) as
    $$
    \sum_{L \not\in S \cup \{j \} } \ev{I_L} \leq \sum_{L \not\in S} \ev{I_L} - \Infl{\distrone}{j}{\bfunone}
    $$
    or
    \begin{equation}\label{equ:final}
    \ev{I_j} \geq \Infl{\distrone}{j}{\bfunone} \;.
    \end{equation}
    But for every $x$ on which $j$
    is not queried by the tree $\dtone$, $\dtone(x) = \dtone(x\oplus e_j)$. Hence if
    $j$ is influential w.r.t. $x$ then certainly $j$ is queried, and
    (\ref{equ:final}) follows.
    For obvious reasons $\distrone = \tfrac{1}{2}\subst{\distrone}{j}{0} +
    \tfrac{1}{2}\subst{\distrone}{j}{1}$ and so in particular
    $$
    \enq{\distrone}{\nsetone\cup \{j \}}{\dtone} =
    \tfrac{1}{2}\enq{\subst{\distrone}{j}{0}}{\nsetone\cup \{j \}}{\dtone}+ 
    \tfrac{1}{2}\enq{\subst{\distrone}{j}{1}}{\nsetone\cup \{j \}}{\dtone}
    $$
    from which the thesis easily follows, given (\ref{equ:intermediate}).
  \end{proof}
  }
By somehow iterating over Lemma \ref{lem:reducedelta}, we can get the following
result, which states that fixing enough coordinates, the variance can
be made arbitrarily low, and that those coordinates can be efficiently
determined:
\begin{theorem} \label{thm:mainuncompressed}
    For every $\bfunone:\PBinStr\parone \rightarrow \PBinStr L$ such that for
    every $t\in [L]$, $F_t$ is computable by a decision tree of
    depth $\leq q$, and every $\varepsilon >0$, there exist a natural
    number $m \leq L
    q^2/\varepsilon$ and a partial function $g:[N] \rightarrow\Bool$
    where $|\dom{g}| \leq m$ such that
    \longv{\begin{equation}\label{eq:smallvariance}}
    \shortv{$}
    \Var{\unif{g}}{F_t} \leq \varepsilon
    \shortv{$}
    \longv{\end{equation}}
    for every $t \in \{1,\ldots,L\}$.  Moreover,    
    there is an polytime randomized algorithm $\mathsf{A}$
    that on input $\bfunone$, $\delta>0$, and $\varepsilon>0$,
    makes at most $O(L N)\cdot poly(q/(\delta\varepsilon))$ queries to $F$
    and outputs such a partial function $g$ with $|\dom{g}| \leq O((L
    q^2)/(\varepsilon\delta))$ with
    probability at least $1-\delta$.
  \end{theorem}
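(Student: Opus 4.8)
The plan is to produce $g$ by a greedy \emph{coordinate-fixing} process, using Lemma~\ref{lem:infvariable} to locate a variable worth fixing and Lemma~\ref{lem:reducedelta} to control how fast we make progress. The right progress measure is the \emph{total average query complexity} $\Phi(g) = \sum_{t=1}^{L}\enq{\unif{g}}{\dom{g}}{\dtone_t}$, where $\dtone_t$ is the depth-$q$ decision tree computing $F_t$. Since every $\dtone_t$ makes at most $q$ queries on any input, $\Phi(\emptyset)\leq Lq$, and $\Phi(g)\geq 0$ for every $g$. Starting from $g=\emptyset$ (so that $\unif{g}$ is uniform), I repeat the following as long as some $F_t$ still has $\Var{\unif{g}}{F_t}>\varepsilon$: pick such a $t$, invoke Lemma~\ref{lem:infvariable} to get a coordinate $j\notin\dom{g}$ with $\Infl{\unif{g}}{j}{F_t}\geq \Var{\unif{g}}{F_t}/q > \varepsilon/q$, and extend $g$ by setting $g(j)=b$ for a uniformly random bit $b$. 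The invariant that $\unif{g}$ stays semi-uniform is exactly the closure property noted just after the definition of $\subst{\distrone}{j}{b}$.

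First I would settle existence. Fix a step at which the coordinate $j$ is added for the function $F_t$. Averaging Lemma~\ref{lem:reducedelta} over the random bit $b$, the term $\enq{\unif{g}}{\dom{g}}{\dtone_t}$ decreases in expectation by at least $\Infl{\unif{g}}{j}{F_t}\geq\varepsilon/q$, while for every other $t'$ the same lemma (with influence $\geq 0$) shows the corresponding term does not increase in expectation. Hence, conditioned on not terminating, each step decreases $\ev{\Phi}$ by at least $\varepsilon/q$. As $\Phi$ starts at most $Lq$ and never drops below $0$, a standard supermartingale/optional-stopping argument gives expected number of steps at most $Lq^2/\varepsilon$; in particular some realization of the random bits terminates within $m\leq Lq^2/\varepsilon$ steps, and the $g$ it produces has $|\dom{g}|\leq Lq^2/\varepsilon$ and satisfies $\Var{\unif{g}}{F_t}\leq\varepsilon$ for all $t$, as required.

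For the algorithm $\mathsf{A}$ I would run the same process, replacing the exact $\Var{\unif{g}}{F_t}$ and $\Infl{\unif{g}}{j}{F_t}$ by empirical estimates drawn from oracle queries: the variance is the collision probability $\Pr_{x,y\sim\unif{g}}(F_t(x)\neq F_t(y))$ and the influence is $\Pr_{x\sim\unif{g}}(F_t(x)\neq F_t(x\oplus e_j))$, both samplable since drawing from $\unif{g}$ is trivial and each sample costs a constant number of queries. Estimating each variance to additive accuracy $\varepsilon/4$ and each influence to accuracy $\varepsilon/(8q)$ via Hoeffding needs $O((q/\varepsilon)^2\log(1/\eta))$ samples for per-estimate failure probability $\eta$. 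The algorithm halts when all estimated variances are below $3\varepsilon/4$ (so the true ones are $\leq\varepsilon$); otherwise it works on a $t$ whose true variance exceeds $\varepsilon/2$, for which Lemma~\ref{lem:infvariable} guarantees a coordinate of true influence $\geq\varepsilon/(2q)$, so selecting the empirically largest influence yields true influence $\geq\varepsilon/(4q)$.

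The final paragraph assembles the bounds. Cap the iterations at $M=O(Lq^2/(\varepsilon\delta))$. Conditioned on all estimates being accurate, each iteration still drops $\ev{\Phi}$ by $\geq\varepsilon/(4q)$, so the stopping time has expectation $\leq 4Lq^2/\varepsilon$ and Markov's inequality bounds the chance of exceeding $M$ by $\delta/2$; this is precisely where the extra $1/\delta$ factor in $|\dom{g}|$ originates. Choosing $\eta=\delta/(2M(L+N))$ and taking a union bound keeps the total estimation-failure probability below $\delta/2$, so $\mathsf{A}$ succeeds with probability $\geq 1-\delta$ and outputs $g$ with $|\dom{g}|\leq M=O(Lq^2/(\varepsilon\delta))$. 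For the queries, each iteration spends $O(L(q/\varepsilon)^2\log(1/\eta))$ on variances and $O(N(q/\varepsilon)^2\log(1/\eta))$ on influences, with $\log(1/\eta)$ polylogarithmic in $L,N,q,1/\varepsilon,1/\delta$; multiplying by $M$ gives the stated $O(LN)\cdot poly(q/(\delta\varepsilon))$ bound. The main obstacle I anticipate is the intertwining of the two sources of randomness—the fixed bits $b$ driving the potential decrease and the sampling noise in the estimates—so the Markov bound on the stopping time must be argued on the event that all estimates are accurate, rather than unconditionally.
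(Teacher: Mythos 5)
Your proposal is correct and follows essentially the same route as the paper's proof: the same greedy coordinate-fixing algorithm driven by Lemmas~\ref{lem:infvariable} and~\ref{lem:reducedelta}, the same potential function $\sum_{t}\enq{\unif{g_i}}{\dom{g_i}}{\dtone_t}$ bounded between $0$ and $Lq$ and decreasing by $\Omega(\varepsilon/q)$ per step in expectation, sampling-based estimates of variance and influence with a union bound over their failure probabilities, and Markov's inequality to cap the iteration count, which is exactly where the paper's $1/\delta$ factor in $|\dom{g}|$ also comes from. Your explicit instantiation of the estimators (collision probability for variance, $F_t(x)$ versus $F_t(x\oplus e_j)$ for influence) and the remark that the Markov bound must be taken conditionally on the accurate-estimates event merely make explicit what the paper leaves implicit.
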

  \longv{
\begin{proof}
  For every $t\in [L]$ let $T_t$ be a tree computing $F_t$ and
  having depth at most $q$.
  The algorithm $\mathbf{A}$ works as follows:
  \begin{center}
  \textbf{Algorithm $\mathbf{A}(\bfunone,\varepsilon,\delta):$}
  \begin{enumerate}
  \item
    $i\leftarrow 0$;
  \item
    $g_0\leftarrow \emptyset$;
  \item
    $D_0\leftarrow \unif{g_0}$;
  \item
    $\gamma\leftarrow\delta/(NL q^2/(\delta \varepsilon))$
  \item\label{instr:firstloop}
    For every $t\in [L]$, perform $v_t\leftarrow\mathbf{EstimateVar}(D_i,F_t,\varepsilon/3,\gamma)$;
  \item
    If for every $t\in [L]$, it holds that $v_t \leq (2/3)\varepsilon$, then return $g_i$, otherwise continue.
  \item
    Let $t_{\uparrow}$ be a $t$ such that $v_t \geq 2\varepsilon/3$;
  \item
    For every $j\in [N]$, perform $n_j\leftarrow\mathbf{EstimateInf}(D_i,F_{t_{\uparrow}},\varepsilon/(10q),\gamma)$;
  \item
    Let $j$ be such that $n_j\geq 0.2 \cdot \varepsilon/q$;
  \item
    $b\leftarrow\{0,1\}$;
  \item
    $g_{i+1}\leftarrow g_i\cup\{(j,b)\}$;
  \item
    $D_{i+1}\leftarrow \unif{g_{i+1}}$;
  \item
    $i\leftarrow i+1$.
  \item
    Go back to \ref{instr:firstloop}.
  \end{enumerate}
  \end{center}




   
  The algorithm can be studied in two steps:
  \begin{itemize}
  \item
    We first of all analyse the complexity of the algorithm $\mathbf{A}$.
    An analysis of the runtime gives us the following result:
    \begin{proposition} \label{prop:expectednumberofiter}
      The number of times Algorithm $\mathbf{A}$ executes its main loop
      is, in expectation, at most $O(L q^2/\varepsilon)$.
    \end{proposition}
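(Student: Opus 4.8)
The plan is to exhibit a single nonnegative potential that drops by a fixed amount per iteration in expectation, and then to convert this drift into a bound on the expected number of iterations by an optional-stopping argument. I would take as potential the total average query complexity of the $L$ trees under the current semi-uniform distribution,
$$
\Phi_i \;=\; \sum_{t \in [L]} \enq{D_i}{\dom{g_i}}{T_t},
$$
where $D_i=\unif{g_i}$ and $T_t$ is the depth-$\le q$ decision tree computing $F_t$. Since a depth-$q$ tree makes at most $q$ queries on every input, $\enq{D_0}{\dom{g_0}}{T_t}\le q$, so $\Phi_0\le Lq$; and $\Phi_i\ge 0$ always, being a sum of nonnegative expected query counts.

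The core of the proof is a lower bound on the one-step expected decrease of $\Phi$. Consider a non-terminating iteration $i$ and condition on the event that the calls to $\mathbf{EstimateVar}$ and $\mathbf{EstimateInf}$ in that iteration are within their prescribed additive accuracies $\varepsilon/3$ and $\varepsilon/(10q)$. Then the selected index $t_{\uparrow}$ has $\Var{D_i}{F_{t_{\uparrow}}}\ge 2\varepsilon/3-\varepsilon/3=\varepsilon/3$, so Lemma~\ref{lem:infvariable} furnishes a $j\notin\dom{g_i}$ with true influence at least $\varepsilon/(3q)$, whose estimate therefore exceeds $0.2\,\varepsilon/q$; hence the influence-selection step can indeed pick some $j$, and for the picked $j$ the true influence is at least $0.2\,\varepsilon/q-\varepsilon/(10q)=\varepsilon/(10q)$. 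Because $b$ is drawn uniformly we have $D_{i+1}=\subst{D_i}{j}{b}$, so averaging over $b$ and applying Lemma~\ref{lem:reducedelta} to each tree gives
$$
\evp{b}{\Phi_{i+1}} = \sum_{t\in[L]}\Bigl(\tfrac12\,\enq{\subst{D_i}{j}{0}}{\dom{g_i}\cup\{j\}}{T_t}+\tfrac12\,\enq{\subst{D_i}{j}{1}}{\dom{g_i}\cup\{j\}}{T_t}\Bigr) \le \Phi_i-\sum_{t\in[L]}\Infl{D_i}{j}{F_t} \le \Phi_i-\tfrac{\varepsilon}{10q},
$$
the last step keeping only the summand $t_{\uparrow}$ and using that all influences are nonnegative. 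Thus, writing $c=\varepsilon/(10q)$ and letting $\mathcal F_i$ be the history up to the choice of $j$ in iteration $i$, we obtain $\ev{\Phi_{i+1}\mid\mathcal F_i}\le\Phi_i-c$ on every accurate step (and $\le\Phi_i$ in any case, since a uniformly fixed coordinate never raises the expected query complexity).

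To finish, I would convert this drift into the stated bound on the number $K$ of loop iterations. The stopped process $\Phi_{i\wedge K}+c\,(i\wedge K)$ is a nonnegative supermartingale, so $c\,\ev{i\wedge K}\le \ev{\Phi_{i\wedge K}+c\,(i\wedge K)}\le\Phi_0$ for every $i$; letting $i\to\infty$ by monotone convergence yields $\ev{K}\le\Phi_0/c\le Lq\cdot(10q/\varepsilon)=O(Lq^2/\varepsilon)$, as claimed. I expect the main obstacle to be not the potential computation but the conditioning: the guaranteed drop $c$ holds only on the high-probability event that the sampling estimators are accurate, so a fully rigorous bound on $\ev{K}$ must either restrict to that good event and control the (rare) inaccurate iterations---using that $\Phi$ is never expected to increase---or fold the estimation error directly into the drift. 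Pinning down this interaction between the adaptively chosen coordinates, the random stopping time, and the concentration guarantees for $\mathbf{EstimateVar}$ and $\mathbf{EstimateInf}$ is the delicate part; the telescoping of the drift itself is routine.
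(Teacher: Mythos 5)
Your proposal follows the paper's own proof essentially step for step: the same potential $\Phi_i=\sum_{t\in[L]}\enq{D_i}{\dom{g_i}}{T_t}$ with $\Phi_0\le Lq$ and $\Phi_i\ge 0$, and the same per-iteration expected drop of $\Omega(\varepsilon/q)$ obtained by averaging over the random bit $b$ and invoking Lemma~\ref{lem:reducedelta} at the selected influential coordinate (the paper treats $t\neq t_{\uparrow}$ by a monotonicity remark while you apply the lemma to every $t$ and discard nonnegative influence terms, a cosmetic difference). Your explicit estimator-accuracy bookkeeping and the supermartingale/optional-stopping conversion of drift into $\ev{K}\le\Phi_0\cdot 10q/\varepsilon$ are in fact more careful than the paper, which simply asserts the drift-to-expectation step and defers estimator failures to a separate union-bound discussion, but the route is the same.
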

    \begin{proof}
    We define the potential function
    $$
    \varphi(i) = \sum_{t=1}^L \Delta_{D_i,s(g_i)}(T_t)
    $$ 
    where $T_t$ is the tree computing $F_t$.  The algorithm does not
    of course have access to this tree but this potential function is
    only used in the analysis. Initially, we are guaranteed that    
    $\varphi(0) \leq L q$ and by definition $\varphi(i) \geq 0$ for
    every $i$. Hence to show that we end in expected number of steps
    $O(L q^2/\varepsilon)$ it is enough to show that in expectation
    $$
    \mathbb{E}[\varphi(i) - \varphi(i+1)] \geq \Omega(\varepsilon/q)
    $$
    for every step $i$ in which we do not stop.
    Indeed, let $i$ be such a step and let $t_0$ be the index such
    that our estimate for $Var(F_{t_0}(D_i)) \geq (2/3)\varepsilon$
    and $j_0$ the influential variable for $F_{t_0}$ dud we
    find. Recall that $g_{i+1}$ is obtained by extending $g_i$ to
    satisfy $g_{i+1}(j_0) = b$ for a random $b \in \Bool$.  For
    every $t \neq t_0$, the expectation over $b$ of
    $\Delta_{D_{i+1},s(g_{i+1})}(T_t)$ is equal to the expected number
    of queries that $T_t$ makes on $D_i$ that do not touch the set
    $s(g_i) \cup \{j \}$.  This number can only be smaller than
    $\Delta_{D_i,s(g_i)}(T_t)$.  For $t=t_0$, by
    Lemma~\ref{lem:reducedelta}, $\mathbb{E}[
      \Delta_{D_{i+1},s(g_{i+1})}(F_t)] \leq \mathbb{E}[
      \Delta_{D_{i+1},s(g_{i+1})}(F_t)] + \Omega(\varepsilon/q)$.
    Hence, in expectation $\varphi(i+1)$ is smaller than $\varphi(i)$
    by an additive factor of $\Omega(\varepsilon/q)$ which is what we
    wanted to prove.
    \end{proof}
  \item
  A further discussion is needed to ensure that the desired level
  of accuracy is actually attained by the algorithm. The reason
  why $\gamma$ is set to $\delta/T$ (where $T=
  O(NL q^2/(\delta \varepsilon))$ is our bound on the number
  of times we need to use these estimates) is to ensure $\gamma$
  to be small enough so that we can take a
  union bound over the chances that any of our estimates
  fail.  Thus the algorithm will
  only stop when it reach $D_i = U_{g_i}$ for which the variance of
  $F_t$ is smaller than $\varepsilon$ for every $t\in [L]$.
\item
  Once we have Proposition~\ref{prop:expectednumberofiter}, the
  theorem follows by noting that if the expected number of iterations
  of the main loop is $m$, the probability that we make more than
  $m/\delta$ iterations is at most $\delta$.  (Since the potential
  function undergoes a biased random walk, a more refined analysis can
  be used to show that $O(m\log(1/\delta))$ iterations will suffice,
  but this does not matter much for our final application and so we
  use the cruder bound of $m/\delta$.)
  \end{itemize}
  \end{proof}}
\subsection{On the Impossibility of Authenticating Functions}\label{sect:compression}
Theorem \ref{thm:mainuncompressed} tells that for every first-order
boolean function which can be computed by a decision tree of
low depth, there exist relatively few of its coordinates that,
once fixed, determine the function's output with very high
probability. If $N$ is exponentially larger than $q$, in particular
there is no hope for such a function to be a secure message authentication code.
In this section, we aim at exploiting all this
to prove that higher-order authentication
is not possible if the argument function has too large a domain.
In order to do it, we build a third-order randomized algorithm,
which will be shown to fit into the our game-theoretic framework.

More specifically, we are concerned with the cryptographic
properties of strategies for the parametrized game
$\sof{q}{r}{p}=!_{q}(\Strpar{r}\arrow\BoolGame)\arrow\Strpar{p}$
and, in particular, with the case in which $r$ is the identity $\idpol$, i.e.
we are considering the game $\linsof{q}{p}=\sof{q}{\idpol}{p}$.
Any such strategy, when deterministic, can be seen as computing 
a family of functions $\{F_n\}_{n\in\NN}$ where
$F_n: (\PBinStr{n} \rightarrow \Bool) \rightarrow \PBinStr{p(n)}$.
How could we fit all this into the hypotheses of Theorem~\ref{thm:mainuncompressed}?

The definitions of variance, influence, and decision tree
can be easily generalised to functions in the form
$F: (\PBinStr{\parone} \rightarrow \Bool) \rightarrow \PBinStr{\partwo}$.
Of course the underlying distribution $\distrone$ must be
a distribution over functions $\PBinStr{\parone} \rightarrow \Bool$.
The parameter $N$ can be fixed in such a way that $n<N<2^n$, where
$n$ is the security parameter. For simplicity we will
choose $N$ to be a power of $2$, which hence
divides $2^n$. \longv{In other words, $N=2^p$, where $p$ is strictly between
$\log_2(n)$ and $n$. Eventually we will set $N$ to be some large enough
polynomial in $n$, i.e., $p$ will be in the form $a\cdot\log_2(n)+b$.}
\begin{definition}[Extension]
  For every $x \in \PBinStr{N}$, we define the \emph{extension of $x$},
  denoted by $f_x$ as the function $f_x:\PBinStr{n} \rightarrow \Bool$
  such that for every $i\in [2^n]$ (identifying $\PBinStr{n}$ with the
  numbers $\{0,\ldots, 2^n-1 \}$ in the natural way),
  it holds that $f_x(i) = x_{\lfloor i/N \rfloor + 1}$.
  That is, $f_x$ is the function that outputs $x_1$ on the first $N$
  inputs, outputs $x_2$ on the second $N$ inputs, and so on and so
  forth. \longv{If $F:(\PBinStr{n} \rightarrow \Bool) \rightarrow \Bool$ is
  a function, then we define $\tilde{F}:\PBinStr{N} \rightarrow \Bool$
  as follows: for every $x\in \PBinStr{N}$, $\tilde{F}(x) = F(f_x)$.}
  Given a distribution $\distrone$ over $\PBinStr{N}$,
  a distribution over functions $\PBinStr{n} \rightarrow \Bool$
  can be formed in the natural way as $f_{\distrone}$.
\end{definition}

We will also make use of the following slight variation on
the classic notion of Hamming distance: define $H(\cdot,\cdot)$ to be the
so-called \emph{normalized Hamming distance}. In fact, we overload
the symbol $H$ and use it for both
\emph{strings} in $\PBinStr{N}$ and \emph{functions} in $\PBinStr{n} \rightarrow X$ for
some set $X$. That is, if $x,y\in \{0,1\}^N$ then
$H(x,y) = \Pr_{j\in [N]}[x_j \neq y_j]$
while if $f,g\in \PBinStr{n} \rightarrow X$ then
$H(f,g) = \Pr_{i\in \PBinStr{n}}[f(i) \neq g(i)]$.
\longv{
We use the following simple lemma
\begin{lemma}
For every $x,y\in \PBinStr{N}$, $H(f_x,f_{y}) = H(x,y)$
\end{lemma}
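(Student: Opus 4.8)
The plan is to reduce both sides of the claimed identity to a counting statement and to observe that the extension operation $x\mapsto f_x$ intertwines the uniform measure on $[N]$ with the uniform measure on $\PBinStr{n}$. Concretely, I would first unfold the two instances of $H$ according to their definitions: the right-hand side is $H(x,y)=\tfrac{1}{N}\,\card{\{j\in[N]\mid x_j\neq y_j\}}$, while the left-hand side is $H(f_x,f_y)=\tfrac{1}{2^n}\,\card{\{i\in\PBinStr{n}\mid f_x(i)\neq f_y(i)\}}$, where we identify $\PBinStr{n}$ with $\{0,\ldots,2^n-1\}$ as in the definition of the extension.

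Next I would exploit the block structure built into the extension. By definition $f_x(i)=x_{c(i)}$ for a fixed coordinate-selecting surjection $c\colon\{0,\ldots,2^n-1\}\to[N]$ that does \emph{not} depend on $x$: it merely records, for each input $i$, the index of the block to which $i$ belongs. Since the same $c$ governs both $f_x$ and $f_y$, we obtain the pointwise equivalence $f_x(i)\neq f_y(i)\iff x_{c(i)}\neq y_{c(i)}$, so that
\begin{equation*}
\card{\{i\mid f_x(i)\neq f_y(i)\}}=\sum_{j\in[N]}\card{c^{-1}(j)}\cdot[\,x_j\neq y_j\,].
\end{equation*}
The crucial point is that all blocks have the same size: because $N$ was chosen to divide $2^n$, each fibre $c^{-1}(j)$ has cardinality exactly $2^n/N$. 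Substituting this common value factors it out of the sum, leaving $\tfrac{2^n}{N}\,\card{\{j\mid x_j\neq y_j\}}$. Dividing by $2^n$ as dictated by the definition of $H$ on functions cancels the factor $2^n$ and reproduces exactly $\tfrac{1}{N}\,\card{\{j\mid x_j\neq y_j\}}=H(x,y)$, concluding the argument.

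The only genuine content---and the one place where the hypotheses are used---is the equal-fibre claim $\card{c^{-1}(j)}=2^n/N$, which is precisely why $N$ was taken to be a power of two dividing $2^n$; were the blocks of unequal size, the uniform distribution on inputs would not push forward to the uniform distribution on $[N]$ and the identity could fail. Everything else is bookkeeping, so I expect no real obstacle beyond making the indexing conventions of the extension explicit.
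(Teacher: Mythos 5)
Your proof is correct and follows essentially the same route as the paper's: the paper also identifies $\PBinStr{n}$ with $\{0,\ldots,2^n-1\}$, observes that since $N$ divides $2^n$ the block-index map pushes the uniform distribution on inputs forward to the uniform distribution on $[N]$, and regroups the sum over $i$ into a sum over $j\in[N]$ with the common factor $2^n/N$. Your equal-fibre counting formulation is just the paper's probabilistic computation restated in terms of cardinalities, so there is nothing further to add.
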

\begin{proof}
  If we choose $i\in \PBinStr{n}$ then (identifying $\PBinStr{n}$ with $\{
  0,\ldots 2^n-1\}$), since $N$ divides $2^n$, the distribution
  $\lfloor i/N \rfloor$ is uniform over $\{0,\ldots, N-1 \}$ and so
  the two probabilities in the definition of the Hamming distance are
  the same. More formally:
  \begin{align*}
  H(f_x,f_y)&= \Pr_{i\sim\PBinStr{n}}[f_x(i) \neq f_y(i)]=\sum_{i\in\PBinStr{n}}\frac{1}{2^n}\left|f_x(i)-f_y(i)\right|\\
  &=\sum_{i\in\PBinStr{n}}\frac{1}{2^n}\left|x_{\lfloor i/N \rfloor + 1}-y_{\lfloor i/N \rfloor + 1}\right|
  =2^{n-p}\sum_{j\in[N]}\frac{1}{2^n}\left|x_{j}-y_{j}\right|\\
  &=\sum_{j\in[N]}\frac{1}{2^p}\left|x_{j}-y_{j}\right|
  =\Pr_{j\sim [N]}[x_j \neq y_j]=H(x,y).
  \end{align*}
\end{proof}}
The game $\TStrpar{q}$ is a slight variation on
$\Strpar{q}$ in which the returned string is in
a ternary alphabet $\{0,1,\bot\}$. Any strategy
for $\TStrpar{q}$ can be seen as representing a
(family of) partial functions from $[q(n)]$ to
$\Bool$.
\begin{theorem} \label{thm:dropvariance}
  For every $\varepsilon,\delta>0$,
  there is a polytime probabilistic strategy $\strinfvar_{\varepsilon,\delta}$ on the game
  $!_{s}(\linsof{q}{p})\arrow\TStrpar{t}$ such that
  for every deterministic strategy $\stratone$
  on $\linsof{q}{p}$ computing $\{F_n\}_{n\in\NN}$,
  the composition $\stratone;\strinfvar_{\varepsilon,\delta}$,
  with probability at least $1-\delta$, computes
  some functions $g_n:[t(n)] \rightarrow \Bool$
  such that $\Var{f_{\unif{g_n}}}{F_n} \leq \varepsilon$ and $|\dom{g_n}| \leq \mathcal{O}(p(n)
  q^2(n)/(\delta \epsilon))$. 
\end{theorem}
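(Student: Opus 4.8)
The plan is to repackage the randomized algorithm $\mathsf{A}$ of Theorem~\ref{thm:mainuncompressed} as a strategy in the game model, bridging between functions $\PBinStr{n}\rightarrow\Bool$ and the finite strings to which that theorem applies via the extension construction. First I would fix the polynomial $t$ so that $N := t(n)$ is a power of two with $n < N < 2^n$; then $N$ divides $2^n$, the extension $x \mapsto f_x$ is well defined, and any semi-uniform distribution $\unif{g}$ over $\PBinStr{N}$ pushes forward along $x\mapsto f_x$ to a distribution $f_{\unif{g}}$ over functions $\PBinStr{n}\rightarrow\Bool$.

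Given any deterministic $\stratone$ on $\linsof{q}{p}$ computing $\{F_n\}$, I would set $\tilde{F}_n : \PBinStr{N}\rightarrow\PBinStr{p(n)}$ by $\tilde{F}_n(x) = F_n(f_x)$. The key observation is that each output bit $\tilde{F}_{n,t}$ is computed by a decision tree over $\PBinStr{N}$ of depth at most $q(n)$: because the bounded exponential $!_q$ lets $F_n$ query its argument at most $q(n)$ times, and a query at a point $i$ reads exactly the single coordinate $x_{\lfloor i/N\rfloor+1}$, the deterministic interaction of $\stratone$ with $f_x$ touches at most $q(n)$ coordinates of $x$. This meets the hypothesis of Theorem~\ref{thm:mainuncompressed} with $L = p(n)$. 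Applying that theorem then yields, with probability at least $1-\delta$, a partial function $g_n:[N]\rightharpoonup\Bool$ with $|\dom{g_n}| \leq \mathcal{O}(p(n)q^2(n)/(\varepsilon\delta))$ and $\Var{\unif{g_n}}{\tilde{F}_{n,t}} \leq \varepsilon$ for every $t$. Since $\tilde{F}_n(x) = F_n(f_x)$, pushing $\unif{g_n}$ forward gives $\Var{f_{\unif{g_n}}}{F_{n,t}} = \Var{\unif{g_n}}{\tilde{F}_{n,t}} \leq \varepsilon$, and the partial function $g_n$ on $[t(n)]$ is exactly encodable as a play of $\TStrpar{t}$ over the ternary alphabet $\{0,1,\bot\}$.

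It then remains to realize $\mathsf{A}$ as the strategy $\strinfvar_{\varepsilon,\delta}$. Recall that a probabilistic strategy on $!_s(\linsof{q}{p})\arrow\TStrpar{t}$ is a deterministic polytime strategy on $\OracleGame{u}\arrow(!_s(\linsof{q}{p})\arrow\TStrpar{t})$; I would draw the random bits that $\mathsf{A}$ needs for its variance and influence estimators and for choosing each fixed value $b$ from the oracle $\OracleGame{u}$, which suffices because $\mathsf{A}$ consumes only polynomially many of them. Each query $\mathsf{A}$ makes to $\tilde{F}_n$ at a string $x$ would be realized as one evaluation of the opponent $\stratone$: $\strinfvar$ hands $f_x$ to a fresh copy of $\linsof{q}{p}$ supplied by $!_s$, answering each argument query $i$ with $x_{\lfloor i/N\rfloor+1}$, and reads back $F_n(f_x)$. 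Setting $s$ to the query bound $\mathcal{O}(p(n)\,N)\cdot\mathrm{poly}(q(n)/(\delta\varepsilon))$ of Theorem~\ref{thm:mainuncompressed} — polynomial in $n$ since $N$, $p$ and $q$ are — the composition $\stratone;\strinfvar_{\varepsilon,\delta}$ reproduces the run of $\mathsf{A}$ on $\tilde{F}_n$ and so outputs $g_n$ with the claimed properties.

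I expect the main obstacle to be the game-theoretic bookkeeping of this last step rather than any new statistics. One must verify that the reorganized algorithm genuinely lives in $\PolyPG$ with randomness modelled by the oracle: that $s(n)$ and the number of oracle bits are both polynomially bounded, that maintaining the current $g_i$, translating each sampled $x$ into an interaction with $\stratone$, and running the estimators is all polytime uniformly in $1^n$, and that the interaction discipline of Figure~\ref{fig:playssof} is respected so that composition really evaluates $\tilde{F}_n$ coordinate by coordinate. The depth bound and the variance/influence tradeoff are already delivered by Theorem~\ref{thm:mainuncompressed}, so the genuinely new content is confined to fitting the algorithm into the parametrized-game framework.
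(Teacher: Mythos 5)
Your proposal takes essentially the same route as the paper's own proof: define $\tilde{F}_n(x)=F_n(f_x)$ via the extension, observe that the bounded exponential $!_{q}$ makes each output bit of $\tilde{F}_n$ computable by a decision tree of depth at most $q(n)$, apply Theorem~\ref{thm:mainuncompressed} with $L=p(n)$ and $N=t(n)$, and translate the queries of algorithm $\mathsf{A}$ into interactions with $\stratone$ through the copies supplied by $!_{s}$, with randomness drawn from the oracle. You actually spell out more of the game-theoretic bookkeeping (the choice of $N$ as a power of two, the pushforward of $\unif{g_n}$ along $x\mapsto f_x$, and the polynomial bound on $s$) than the paper's own very terse proof does, so there is nothing to correct.
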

\longv{
\begin{proof}
  Please observe that $\tilde{F}$ is a \emph{function} from
  $\PBinStr{N}$ to $\PBinStr{L}$ which, since $F$ is
  computable by a decision tree of depth $q$, can be computed
  itself by a decision tree of the same depth.
  The result follows by just looking at the function $\tilde{F}$ and
  applying to it Theorem~\ref{thm:mainuncompressed}. Actually,
  the queries that algorithm $\mathbf{A}$ makes to $\tilde{F}$
  correctly translate into queries of the form $f_i=f_{x_i}$.
\end{proof}}
Remarkably, the strategy $\strinfvar_{\varepsilon,\delta}$ is
that it infers the ``influential variables'' of the strategy with which it interacts
\emph{without} looking at how the latter queries its argument function, something
which would anyway be available in the history of the interaction.

We can now obtain the main result of this section.
\begin{theorem} \label{thm:main}
  For every $\delta$ there is a polytime probabilistic
  strategy $\strcoll_{\delta}$ on a game
  $!_{s}(\linsof{q}{p})\arrow(\Strpar{\idpol}\arrow\BoolGame)\tensor(\Strpar{\idpol}\arrow\BoolGame)$
  such that for every deterministic strategy $\stratone$ on $\linsof{q}{p}$
  computing $\{\bfunone_n\}_{n\in\NN}$, the composition
  $\stratone;\strcoll_{\delta}$, with probability at least $1-\delta$,
  computes two function families $g,h$ with
  $g_n,h_n:\PBinStr{n}\rightarrow\Bool$ such
  that
  \begin{enumerate}
  \item
    $H(g_n,h_n) \geq 0.1$
  \item
    $F_n(g_n)=F_n(h_n)$.
  \item
    For every function $f$ on which
    $\strcoll_{\delta}$ queries its argument, it holds that
    $H(f,g_n) \geq 0.1$ and $H(f,h_n) \geq 0.1$.
  \end{enumerate}
  
\end{theorem}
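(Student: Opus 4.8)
The plan is to obtain $\strcoll_\delta$ by post-composing the variable-inference strategy $\strinfvar_{\varepsilon,\delta'}$ of Theorem~\ref{thm:dropvariance} with a sampling step. Concretely, $\strcoll_\delta$ first runs $\strinfvar_{\varepsilon,\delta'}$ against its argument $\stratone$ (through the $!_s$-copies of $\linsof{q}{p}$), thereby computing with probability $\geq 1-\delta'$ a partial function $\tilde g_n$ over the $N$ blocks with $|\dom{\tilde g_n}|\leq \mathcal O(p(n)q^2(n)/(\delta'\varepsilon))$ such that the semi-uniform distribution $\unif{\tilde g_n}$ makes $F_n$ have variance at most $\varepsilon$. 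Once $\tilde g_n$ is fixed, the strategy draws two strings $x',x''\in\PBinStr{N}$ independently from $\unif{\tilde g_n}$ using \emph{fresh} oracle bits, disjoint from those consumed by $\strinfvar$, and outputs the two extensions $g_n := f_{x'}$ and $h_n := f_{x''}$ as the two components of $(\Strpar{\idpol}\arrow\BoolGame)\tensor(\Strpar{\idpol}\arrow\BoolGame)$; a query $i\in\PBinStr{n}$ to, say, $g_n$ is answered by the block bit $x'_{\lfloor i/N\rfloor+1}$, which is the fixed value when that block lies in $\dom{\tilde g_n}$ and a dedicated oracle bit otherwise, so that answers stay consistent across repeated queries. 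Since $N$ is a fixed polynomial in $n$ and $\strinfvar$ is polytime, $\strcoll_\delta$ is a polytime probabilistic strategy, and $s$ can be chosen to dominate the polynomial number of argument queries made by $\strinfvar$.

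Properties 1 and 2 then follow from the two standard facts about $\unif{\tilde g_n}$. For property 2, $g_n$ and $h_n$ are, by construction, two independent draws from $f_{\unif{\tilde g_n}}$, so $\Pr[F_n(g_n)\neq F_n(h_n)]$ is exactly the variance of $F_n$ under $\unif{\tilde g_n}$ (union-bounding over the $p(n)$ output bits if the bound of Theorem~\ref{thm:dropvariance} is read per coordinate), hence at most $\varepsilon$ (resp. $p(n)\varepsilon$). For property 1, I choose the block count $N=2^a$ to be a polynomial of high enough degree that $m:=|\dom{\tilde g_n}|\leq 0.01\,N$; then on the $\geq N-m$ free blocks the bits of $x'$ and $x''$ are independent and uniform, so since the extension map preserves Hamming distance (as $H(f_x,f_y)=H(x,y)$) the quantity $H(g_n,h_n)=H(x',x'')$ has expectation $\geq \tfrac12(1-m/N)$ and, by Hoeffding over the free blocks, is $\geq 0.1$ except with probability $\exp(-\Omega(N))$.

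The crux is property 3, and it is exactly here that the fresh-randomness ordering matters. Every function on which $\strcoll_\delta$ queries its argument is produced during the $\strinfvar$ phase, hence is of the form $f_x$ for some $x\in\PBinStr{N}$ and, crucially, is \emph{fixed before} $x',x''$ are drawn. Conditioning on $\tilde g_n$ and on the (polynomially many) query strings $x$, each such $x$ is forced to agree with the fresh draws only on the blocks of $\dom{\tilde g_n}$, while on the $\geq N-m$ free blocks the draw is independent uniform; thus $H(f_x,g_n)=H(x,x')$ is again $\geq 0.1$ except with probability $\exp(-\Omega(N))$, and likewise for $h_n$. A union bound over all query functions and both outputs keeps the total failure of property 3 exponentially small in $N$. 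This is the step that carries the cryptographic content: the collision witnesses $g_n,h_n$ are genuinely far from everything the strategy inspected, so $F_n$ — read as a candidate authentication code — is forced to assign them the same tag without ever having been evaluated anywhere near them.

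Finally I would set, e.g., $\delta'=\varepsilon=\delta/3$ (a constant, which only inflates the already-polynomial $m$) and pick the degree of $N$ large enough that the sum of all the $\exp(-\Omega(N))$ concentration failures in properties 1 and 3 is at most $\delta/3$; a union bound over the failure of Theorem~\ref{thm:dropvariance}, of the collision event, and of the two distance events then bounds the overall failure probability by $\delta$, completing the argument. The main obstacle is property~3: the other two pieces are immediate consequences of Theorem~\ref{thm:dropvariance} and of concentration of Hamming distance, whereas property~3 requires structuring the construction so that the argument-queries are committed to \emph{before} the collision witnesses are sampled with independent randomness.
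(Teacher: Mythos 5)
Your construction is essentially the paper's own proof: run the strategy of Theorem~\ref{thm:dropvariance} to obtain a partial function $g$ fixing the influential coordinates, choose $N$ a large enough polynomial so that $|\dom{g}|$ is a tiny fraction of $N$, sample $x',x''$ independently from $\unif{g}$, output the extensions $f_{x'},f_{x''}$, and conclude via Chernoff/Hoeffding concentration for properties 1 and 3 (with a union bound over the polynomially many queries) and via the low variance, union-bounded over the $L=p(n)$ output coordinates, for property 2. Your treatment is in fact \emph{more} explicit than the paper's sketch on two points the paper glosses over: the lazy, consistent answering of queries to $g_n,h_n$ inside the game $(\Strpar{\idpol}\arrow\BoolGame)\tensor(\Strpar{\idpol}\arrow\BoolGame)$ using dedicated oracle bits per free block, and the observation that property 3 hinges on the argument-queries being committed before the fresh randomness for $x',x''$ is consumed.

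There is, however, one concrete flaw in your parameter setting, and it contradicts a caveat you yourself raised. Since $F_n$ maps into $\PBinStr{p(n)}$ and variance is defined per Boolean coordinate (Theorem~\ref{thm:mainuncompressed} bounds $\Var{\unif{g}}{F_t}\leq\varepsilon$ for each $t\in[L]$), the collision bound from property 2 is $\Pr[F_n(g_n)\neq F_n(h_n)]\leq p(n)\,\varepsilon$, as you note parenthetically. But your final choice $\varepsilon=\delta/3$, a \emph{constant}, then gives failure probability $p(n)\delta/3$, which exceeds $1$ for large $n$, so the theorem's conclusion fails as you have set things up. You must take the variance threshold to scale as $\varepsilon=\Theta(\delta/p(n))$ — this is exactly what the paper does implicitly when it demands variance below $\delta\epsilon$ with per-coordinate failure $\delta/L$. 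The repair is harmless: the algorithm $\mathsf{A}$ of Theorem~\ref{thm:mainuncompressed} makes $O(LN)\cdot poly(q/(\delta\varepsilon))$ queries, so with $1/\varepsilon=O(p(n)/\delta)$ everything remains polynomial, $|\dom{g}|$ becomes $O(p(n)^2q(n)^2/\delta^2)$, and you just pick the degree of $N$ larger still; but without this $n$-dependent choice of $\varepsilon$ the union bound in property 2 genuinely breaks.
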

This shows that $\strcoll$ finds a \emph{collision} for $F_n$ as
a pair of functions that are different from each other (and in fact
significantly different in Hamming distance) but for which $F_n$ outputs
the same value, and hence $F$ cannot be a collision-resistant hash
function.  Moreover, because the functions are far from those queried,
this means that $F_n$ cannot be a secure message authentication code either,
since by querying $F_n$ on $g_n$, the adversary can predict the value of
the tag on $h_n$.

\longv{
\begin{proof}
We run the algorithm of Theorem~\ref{thm:dropvariance} and obtain a
partial function $g$ such that $|S(g)| \leq 10 L q^2 /
(\delta^2\epsilon)$ and such that the variance of every one of the
functions $F_1,F_2,\ldots,F_L$ is smaller than $\delta\epsilon$.
We choose $N$ to be large enough so that the bound $10 L q^2/
(\delta^2 \epsilon)$ on $|S(g)|$ is smaller than $\delta N / 100$.
Once we achieve this, we sample $x'$ and $x''$ independently from
$U_g$ and set $g= f_{x'}$ and $h = f_{x''}$. Since we fixed only a
$0.01 \delta$ fraction of the coordinates, and $x'$ and $x''$ are
random over the remaining ones, using the standard Chernoff bounds
with high probability $g$ and $h$ will differ on at least $1 -
2^{-L} - \delta/10$ fraction of the remaining coordinates from each
other and also from all other previous queries. (Specifically, the
probability of the difference being smaller than this is exponentially
small in $\delta N$ and we can make $N$ big enough so that this is
much smaller than $\delta/(L m)$ and so we can take a union bound
on all queries.)  On the other hand, because of the variance, the
probability that $F_i(g) \neq F_i(h)$ for every $i$ is less than
$\delta/L$ and so we can take a union bound over all $L$ $i$'s
to conclude that $G(g)=G(h)$ with probability at least $1-\delta$.
\end{proof}}
\subsection{A Positive Result on Higher-Order Pseudorandomness}\label{sect:positive}
We conclude this paper by giving a positive
result. More specifically, we prove that pseudorandomness
can indeed be attained at second order, but at an high price, namely
by switching to the type
$\logsof{p}=\sof{\idpol}{\ulg}{p}$.
This indeed has the same structure of $\linsof{q}{p}$, but the argument
function takes in input strings of \emph{logarithmic size} rather
than linear size. Moreover, the argument function
can be accessed a linear number of times, which is enough
to query it on \emph{every possible} coordinate.

The fact that that a strategy on $\logsof{p}$ can query its argument
on every possible coordinate renders the attacks described in the
previous section unfeasible. Actually, $\logsof{p}$ can be seen as
an \emph{interactive} variation of the
game $\Strpar{\idpol}\arrow\Strpar{p}$, for
which pseudorandomness is well known to be attainable starting from
standard cryptographic assumptions~\cite{GoldreichI06}: simply, instead of taking in
input \emph{the whole} string \emph{at once}, it queries it \emph{bit
  by bit}, in a certain order. A \emph{random strategy} of that type,
then, would be one that, using the notation from Figure~\ref{fig:playssof},
\begin{itemize}
\item
  Given $t_{1},\ldots,t_{i}\in\Bool$, returns a
  string $s_{i+1}$ uniformly chosen at random
  from $\PBinStr{r(n)}-\{s_1,\ldots,s_i\}$,
  this for every $i<q(n)$.
\item
  Moreover, based on $t_1,\ldots,t_{r(n)}$, the strategy
  produces a string $v$ chosen uniformly at random
  from $\PBinStr{p(n)}$.
\end{itemize}
Please notice that this random strategy can be considered as a
\emph{random functionals} in $(\PBinStr{r(n)}\rightarrow\Bool)\rightarrow\PBinStr{p(n)}$
only if $r(n)$ is logarithmic, because this
way the final result $v$ is allowed to depend on the value of the
input function in \emph{all possible coordinates}. The process of
generating such a random strategy uniformly can be seen\footnote{the
  strategy at hand would, strictly speaking, need exponentially many
  random bits, which are not allowed in our model.}
as a probabilistic strategy $\strarandsof$.

We are now ready to formally define pseudorandom functions:
\begin{definition}[Second-Order Pseudorandom Function]
  A deterministic polytime strategy $\stratone$ on
  $\Strpar{\idpol}\arrow\logsof{p}$ is said to be pseudorandom iff for every
  probabilistic polytime strategy $\mathcal{A}$
  on $!_{s}\logsof{p}\arrow\BoolGame$ there is
  a negligible function $\varepsilon:\NN\rightarrow\RR_{[0,1]}$
  such that
  $$
  |\Pr(\conv{!_{s}(\stracat;\stratone);\mathcal{A}}{n})-\Pr(\conv{!_{s}\strarandsof;\mathcal{A}}{n})|\leq\varepsilon(n)
  $$
\end{definition}

\newcommand{\ftos}{\mathsf{first2second}}
The way we will build a pseudorandom function consists in constructing
a \emph{deterministic} polytime strategy $\ftos$ for the following game
$!_{\idpol+1}(\LTStrpar{\idpol}\arrow\Strpar{w})\arrow\logsof{p}$, where
$w\in\pols$ is such that $w\geq\ulg$ and $w\geq p$.
The way the strategy is defined is in Figure~\ref{fig:fromfirsttosecond}.
\begin{wrapfigure}{r}{.62\textwidth}
  \fbox{
    \begin{minipage}{.6\textwidth}
      \begin{center}
        \scriptsize
        \begin{tikzpicture}[scale=0.37]
          \node (t1) at (1,1){$!_{\idpol+1}(\LTStrpar{\idpol}$};
          \node (t2) at (3,1){$\arrow$};
          \node (t3) at (5,1){$\Strpar{w})$};
          \node (t4) at (7,1){$\arrow$};
          \node (t5) at (9,1){$!_{\idpol}(\Strpar{\ulg}$};
          \node (t4) at (11,1){$\arrow$};
          \node (t5) at (13,1){$\BoolGame)$};
          \node (t4) at (15,1){$\arrow$};
          \node (t5) at (17,1){$\Strpar{p})$};
          \node (o1) at (-4,0){O};
          \node (mo1) at (17,0){$?_{\Strpar{p}}$};
          \node (p) at (3,-1){$\vdots$};          
          \node (p1) at (-4,-2){P};
          \node (mp1) at (5,-2){$(i,?_{\Strpar{w}})$};
          \node (o2) at (-4,-3){O};
          \node (mo2) at (1,-3){$(i,?_{\LTStrpar{\idpol}})$};
          \node (p3) at (-4,-4){P};
          \node (mo2) at (1,-4){$(i,(z_1\cdots z_{i-1})$};
          \node (o4) at (-4,-5){O};
          \node (mo4) at (5,-5){$(i,j_i))$};
          \node (p1) at (-4,-6){P};
          \node (mp1) at (13,-6){$(i,?_{\BoolGame})$};
          \node (o2) at (-4,-7){O};
          \node (mo2) at (9,-7){$(i,?_{\Strpar{\ulg}})$};
          \node (p3) at (-4,-8){P};
          \node (mo2) at (9,-8){$(i,\alpha(j_i,\{j_1,\ldots,j_{i-1}\})$};
          \node (o4) at (-4,-9){O};
          \node (mo4) at (13,-9){$(i,z_i)$};
          \node (p) at (3,-10){$\vdots$};
          \node (p1) at (-4,-11){P};
          \node (mp1) at (5,-11){$(n,?_{\Strpar{w}})$};
          \node (o2) at (-4,-12){O};
          \node (mo2) at (1,-12){$(n,?_{\LTStrpar{\idpol}})$};
          \node (p3) at (-4,-13){P};
          \node (mo2) at (1,-13){$(n,(z_1\cdots z_n))$};
          \node (o4) at (-4,-14){O};
          \node (mo4) at (5,-14){$(n,v)$};          
          \node (p10) at (-4,-15){P};
          \node (mp10) at (17,-15){$\beta(v)$};    
        \end{tikzpicture}
      \end{center}
  \end{minipage}}
  \caption{Plays in $\caracplay{\ftos}$}\label{fig:fromfirsttosecond}
\end{wrapfigure}
The function $\alpha$ interprets its first argument
(a string in $\PBinStr{w(n)}$), as an element of $\PBinStr{\ulg(n)}$
different from those it takes as second argument, and
distributing the probabilities uniformly. The function $\beta$,
instead, possibly discards some bits of the input and produces
a possibly shorter string.

The way the strategy $\ftos$ is defined so as to be such that
$\stratone;\ftos$ is statistically very close to the random strategy
whenever $\stratone$ is chosen uniformly at random among the strategies
for the parametric game $\LTStrpar{\idpol}\arrow\LTStrpar{w}$. This
allows us to prove the following:
\begin{theorem}
  Let $F:\{0,1\}^n\times\{0,1\}^{\leq n}\rightarrow\{0,1\}^{w(n)}$
  be a pseudorandom function and let $\stratone_F$ be the deterministic
  polytime strategy for the game
  $\Strpar{\idpol}\arrow!_{\idpol+1}(\LTStrpar{\idpol}\arrow\Strpar{w})$ obtained
  from $F$. Then, $\stratone_F;\ftos$ is second-order pseudorandom.
\end{theorem}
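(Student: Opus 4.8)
The plan is to establish second-order pseudorandomness by the standard two-step hybrid argument: a purely statistical step that isolates the design property of $\ftos$, followed by a computational reduction to the assumed pseudorandomness of $F$. The single intermediate object is obtained by feeding $\ftos$ a \emph{truly random} first-order function: let $\rho$ range uniformly over strategies for $\LTStrpar{\idpol}\arrow\Strpar{w}$, i.e.\ over random functions $\{0,1\}^{\le n}\to\{0,1\}^{w(n)}$, and consider $\rho;\ftos$. The argument then factors as
$$
\stracat;\stratone_F;\ftos \;\approx_c\; \rho;\ftos \;\approx_s\; \strarandsof ,
$$
where $\approx_c$ denotes indistinguishability by any probabilistic polytime $\mathcal{A}$ on $!_{s}\logsof{p}\arrow\BoolGame$ and $\approx_s$ denotes statistical closeness of the transcripts these strategies induce under $!_{s}$.

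First I would record the right-hand closeness $\rho;\ftos\approx_s\strarandsof$, which is exactly the design property stated just before the theorem. Because $\rho$ is random and $\ftos$ consults it on the pairwise-distinct prefixes $z_1\cdots z_{i-1}$ of strictly increasing length, the values $j_i=\rho(z_1\cdots z_{i-1})$ are independent and uniform over $\{0,1\}^{w(n)}$; the map $\alpha(j_i,\{j_1,\dots,j_{i-1}\})$ then turns each $j_i$ into a query point for the $\Strpar{\ulg}$-argument that is near-uniform among the still-unused points, matching the way $\strarandsof$ samples its distinct queries, while $\beta(v)$ turns the final value $v=\rho(z_1\cdots z_n)$ into a near-uniform string in $\PBinStr{p(n)}$. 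Summing the small rounding, freshness, and collision errors of $\alpha$ over the polynomially many queries an $\mathcal{A}$ can trigger across all $s(n)$ copies gives a negligible total statistical distance, so $|\Pr(\conv{!_{s}(\rho;\ftos);\mathcal{A}}{n})-\Pr(\conv{!_{s}\strarandsof;\mathcal{A}}{n})|$ is negligible for every $\mathcal{A}$.

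Next I would prove the left-hand indistinguishability by a reduction turning the second-order distinguisher $\mathcal{A}$ into an ordinary oracle distinguisher $\mathcal{D}$ against $F$. Given an oracle $\mathcal{O}$, equal either to $F_k$ for a random key $k$ or to a random $\rho$, the machine $\mathcal{D}$ runs $\mathcal{A}$ and simulates each of the $s(n)$ sessions of $!_{s}\logsof{p}$ by executing $\ftos$ internally: whenever $\ftos$ would consult its first-order function on a prefix it instead calls $\mathcal{O}$ to obtain $j_i$ (respectively the final $v$), while the boolean answers $z_i$ to the $\Strpar{\ulg}$-queries are read off from $\mathcal{A}$'s own moves; at the end $\mathcal{D}$ outputs $\mathcal{A}$'s bit. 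Crucially a single oracle suffices because $!_{s}$ resolves randomness once, so all $s(n)$ copies share the same key, matched by $\mathcal{D}$'s one oracle. By construction $\Pr[\mathcal{D}^{F_k}=1]=\Pr(\conv{!_{s}(\stracat;\stratone_F;\ftos);\mathcal{A}}{n})$ and $\Pr[\mathcal{D}^{\rho}=1]=\Pr(\conv{!_{s}(\rho;\ftos);\mathcal{A}}{n})$, and since $\ftos$ and $\mathcal{A}$ are polytime and issue only polynomially many calls, $\mathcal{D}$ is a legitimate polytime oracle machine. Pseudorandomness of $F$ then bounds $|\Pr[\mathcal{D}^{F_k}=1]-\Pr[\mathcal{D}^{\rho}=1]|$ by a negligible function.

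Combining the two steps through the triangle inequality bounds the quantity $|\Pr(\conv{!_{s}(\stracat;\stratone_F;\ftos);\mathcal{A}}{n})-\Pr(\conv{!_{s}\strarandsof;\mathcal{A}}{n})|$ by the sum of the PRF-advantage of $\mathcal{D}$ and the statistical distance above, both negligible, which is precisely second-order pseudorandomness. I expect the genuine obstacle to be the statistical step, not the reduction: one has to check that $\alpha$ really induces a distribution on $\Strpar{\ulg}$-queries within negligible distance of uniform over the unused points, which forces $2^{w(n)}$ to exceed the number $2^{\ulg(n)}$ of possible query points by a super-polynomial factor, so that near-uniform fresh coordinates can be extracted and the final $p(n)$ output bits remain close to uniform; this is the place where the precise growth of $w$ (beyond the bare $w\ge\ulg$, $w\ge p$) really matters, and where the per-query bounds must be pushed through a union bound over all $s(n)$ copies while respecting the resolve-once semantics of $!_{s}$. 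The reduction itself is routine once one is careful that $\mathcal{D}$ reuses its single oracle across every copy.
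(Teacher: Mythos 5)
Your proposal is correct and follows essentially the same route the paper intends: the paper gives no detailed proof, but the remark preceding the theorem (that $\ftos$ is designed so that $\stratone;\ftos$ is statistically close to $\strarandsof$ when $\stratone$ is uniformly random) is exactly your hybrid $\rho;\ftos$, combined with the standard reduction of a second-order distinguisher to an oracle distinguisher against $F$. Your closing observation that the bare conditions $w\geq\ulg$, $w\geq p$ do not by themselves make the rounding error of $\alpha$ negligible, so that $w$ must grow enough for $2^{w(n)}$ to dominate the number of query points super-polynomially, is a legitimate refinement the paper's sketch leaves implicit.
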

\section{Related Work}
Game semantics and geometry of interaction are among
the best-studied program semantic frameworks (see, e.g.~\cite{AJM00,HO00}), and can also be
seen as computational models, given their operational
flavor. This is particularly apparent in the work on abstract
machines~\cite{CurienHerbelin07,FG13}, but also on the so-called geometry of synthesis~\cite{Ghica07}. In this
paper, we are particularly interested in the latter use of game semantics,
and take it as the underlying computational model. Our game model
is definitionally strongly inspired by Wolverson~\cite{Wolverson09}: the main novelty
with respect to the latter is the treatment of randomized strategies,
and the bounded exponential construction, which together allows us
to account for efficient randomized higher-order computation.

This is certainly not the first paper in which cryptography
is generalized to computational models beyond the one of
first-order functions. One should of course cite Canetti's
universally composable security~\cite{Canetti01}, but also Mitchell et al.'s
framework, the latter based on process algebras~\cite{MRST06}. None
of them deals with the security of higher-order functions, though.
\bibliographystyle{plain}
\bibliography{main}

\end{document}